  \providecommand\BibTeX{{%
    \normalfont B\kern-0.5em{\scshape i\kern-0.25em b}\kern-0.8em\TeX}}}
\begin{document}

\title{Byzantine-Resilient Learning Beyond Gradients: \\ Distributing Evolutionary Search}

\author{Andrei Kucharavy}
\email{andrei.kucharavy@hevs.ch}
\affiliation{
  \institution{IC School, EPFL}
  \streetaddress{Rte Cantonale}
  \city{Lausanne}
  \country{Switzerland}
  \postcode{1015}
}
\authornote{Corresponding author; Now at HES-SO Valais-Wallis, Sierre, Switzerland}

\author{Matteo Monti}
\email{matteo.monti@epfl.ch}
\affiliation{
  \institution{IC School, EPFL}
  \streetaddress{Rte Cantonale}
  \city{Lausanne}
  \country{Switzerland}
  \postcode{1015}
}

\author{Rachid Guerraoui}
\email{rachid.guerraoui@epfl.ch}
\affiliation{
  \institution{IC School, EPFL}
  \streetaddress{Rte Cantonale}
  \city{Lausanne}
  \country{Switzerland}
  \postcode{1015}
}

\author{Ljiljana Dolamic}
\email{ljiljana.dolamic@ar.admin.ch}
\affiliation{
  \institution{Cyber-Defence Campus, armasuisse}
  \streetaddress{Feuerwerkerstrasse 39}
  \city{Thun}
  \country{Switzerland}
  \postcode{3602}}

\renewcommand{\shortauthors}{Kucharavy, et al.}

\begin{abstract}
Modern machine learning (ML) models are capable of impressive performances across numerous domains. However, their prowess is not due only to the improvements in their architecture and training algorithms, but also to the drastic increase in computational power available to them.

Such a drastic increase led to a growing interest in distributed ML algorithms, and, as adversarial attacks became an increasingly pressing concern, the emergence of their byzantine-resilient versions. While distributed byzantine resilient algorithms have been proposed in a differentiable setting, none exist in a gradient-free setting.

The goal of this paper is to address this shortcoming. For that, we first introduce a more general definition of byzantine-resilience in ML - the \textit{model-consensus}, that extends the definition of the classical distributed consensus. We then leverage this definition to show that a general class of gradient-free ML algorithms - ($1,\lambda$)-Evolutionary Search - can be combined with classical distributed consensus algorithms to generate gradient-free byzantine-resilient distributed learning algorithms. We provide proofs and pseudo-code for two specific cases - the Total Order Broadcast and proof-of-work leader election.

To our knowledge, this is the first time a byzantine resilience in gradient-free ML was defined, and algorithms to achieve it - were proposed.
\end{abstract}

\begin{CCSXML}
<ccs2012>
<concept>
<concept_id>10010147.10010257.10010293.10011809.10011813</concept_id>
<concept_desc>Computing methodologies~Genetic programming</concept_desc>
<concept_significance>500</concept_significance>
</concept>
<concept>
<concept_id>10003752.10003809.10010172</concept_id>
<concept_desc>Theory of computation~Distributed algorithms</concept_desc>
<concept_significance>500</concept_significance>
</concept>
<concept>
<concept_id>10010147.10010257.10010321</concept_id>
<concept_desc>Computing methodologies~Machine learning algorithms</concept_desc>
<concept_significance>300</concept_significance>
</concept>
<concept>
<concept_id>10010147.10010257</concept_id>
<concept_desc>Computing methodologies~Machine learning</concept_desc>
<concept_significance>300</concept_significance>
</concept>
</ccs2012>
\end{CCSXML}

\ccsdesc[500]{Computing methodologies~Genetic programming}
\ccsdesc[500]{Theory of computation~Distributed algorithms}
\ccsdesc[300]{Computing methodologies~Machine learning algorithms}
\ccsdesc[300]{Computing methodologies~Machine learning}

\keywords{Evolutionary Search, 
Gradient-free optimization,
Distributed machine learning,
Byzantine Fault Tolerance}


\maketitle

\section{Introduction}

Over the last decade, the machine learning field underwent transformative growth, achieving and surpassing human capabilities in a variety of domains, ranging from image classification and facial recognition to image generation to strategy games \cite{ImageNet2012, 2015FaceNet, Karras2019, AlphaGoTwo2020}. Beyond impressive performance in the academic setting, Machine Learning (ML) and Artificial Intelligence (AI) progressively became central to numerous tasks, ranging from translation to autonomous driving \cite{GoogleTranslate2017JeffDean, DeepLearningSelfDriving2020}. Perhaps the most impressive recent development is the arrival of conversational agents driven by Large Neural Language Models (LLMs) \cite{InstructGPT2022OpenAI, BlenderBot2022Facebook, Sparrow2022Google}.

However, the emergence of ML and AI as powerful and widely accessible tools is not only due to the discovery of better model architectures and algorithms to train them but also due to the increasing computational capabilities and data volumes available to train them. Empirical demonstrations of the performance of stochastic gradient descent (SGD) applied to artificial neural networks (ANNs) were already available by mid-1980s \cite{Werbos1974, LeCun1985}, and theoretically explained by early 1990s \cite{Hornik1989}. However, it wasn't until sufficient computational power became affordable and sufficiently large training datasets were accumulated that the machine learning revolution truly started \cite{lecun2015}. This joint scaling of models and dataset sizes and resources invested in training them still drives ML progress today, notably for LLMs \cite{ScalingLawsLLM2020OpenAI, Antropic2022BiggerModels, ComputeOptimalLLMs2022Google}.

\subsection{Gradient-Free Learning}

The ongoing machine learning revolution has not affected all the domains equally, given that best-performing algorithms rely on ANNs and gradient descent. Image processing was one of the first domains that saw early breakthroughs \cite{ImageNet2012}, more recently followed by natural language processing as means to make text interpretation and generation continuous through word embedding and positional encoding were perfected \cite{Word2Vec2013, Transformers2017}. However, a number of problems have so far evaded conversion to a continuous formulation, notably in the control theory domain. 

A set of approaches have been developed for such problems, with empirical gradients and reinforcement learning (RL) being two major examples. However, neither of the approaches scales to large, overparameterized models, known to be needed to train models solving complex problems \cite{LossLandscapesVis2018}. Empirical gradient estimation struggles in high dimensions and around saddle points, and is hard to parallelize due to the need for a synchronized round of gradient evaluations pooling, which is expensive computation and communication-wise for larger models. 

Similar problems exist as well in reinforcement learning. Despite some of its impressive achievements, such as super-human performance in the game of Go \cite{AlphaGoTwo2020}, RL suffers from fundamental limitations. Notable failure modes are the cases where the observations ("rewards") are sparse ("long time horizons") and noisy. In such settings, the policy reward estimator's variance will increase to the point where the learning process becomes unstable. Such instability is not limited to pathological settings - even in cases it performs well, RL requires a hyperparameter space search to find a working training regime even for problems where it performs well \cite{SalmansSutskever2017, JeffCluneNeurIPS2018}. Such instability is not a fluke either. There are theoretical reasons why approaches that reduce learning in a non-differentiable setting to a differentiable one would underperform compared to gradient-free black-box optimization approaches \cite{SalmansSutskever2017, GradientsIsNotAllYouNeed2022}. 

This is particularly relevant now, given that the latest development in the LLM field is conversation agents, which rely on optimization based on discrete feedback to align their behavior on user expectations and non-differentiable layers of hard attention to solve the issues with rule-following that plague them \cite{InstructGPT2022OpenAI, SeeKeR2022Facebook, Sparrow2022Google}.



\subsection{Evolutionary Search}

Limitations of RL and empirical gradients approaches led to an increased interest in gradient-free black-box optimization algorithms, notably Evolutionary Algorithms (EAs). Introduced shortly after the SGD itself, \cite{Fogel1966}, EAs are expected to scale well with more computational power, just like the SGD itself. This was confirmed experimentally, including on ranges of control tasks where they outperformed RL approaches, all while allowing better scaling \cite{SalmansSutskever2017, 2017UberGeneticAlgos, JeffCluneNeurIPS2018, CluneMiikkulainen2019}. These empirical results led to a renewed interest in EAs in ML and discovery of cases where they out-performed RL and other black-boxes approaches, such as model architecture design \cite{GoogleEvoImageArchiSearch2019}.

Despite its simplicity, the first evolutionary algorithm proposed by \cite{Fogel1966} in 1966 is still able to match and out-perform RL approaches on complex problems \cite{2017UberGeneticAlgos, CluneMiikkulainen2019, GoogleEvoImageArchiSearch2019}. Akin to SGD, it is an iterative optimization algorithm, however, instead of calculating the local gradient, it samples the neighborhood of the current model parameters to find a better solution and retains the single best one among all sampled ones. Formally known as ($1,\lambda$)-Evolutionary Search (($1,\lambda$)-ES) \footnote{Given multiple conflicting names for different EA algorithms, we have here adopted the taxonomy from \cite{2015EvolutionStrategiesReview}. Notable cases are the use of the "Genetic Algorithm" name to designate ($1,\lambda$)-ES in prior literature, that we reserve to algorithms including "chromosomes" or "recombination" as per the original article \cite{Goldberg1989GAs}, or use of Evolutionary Search for Natural Evolutionary Search (NES), that is closer to empirical gradient approach than ES proper \cite{2008NES}.}

In addition to their reasonable performance, ($1,\lambda$)-ES class algorithms have an additional advantage - scalability. As a population algorithm, every parameter sample can be evaluated independently, and an optimal parameter update - shared among all workers once a desired population of candidate updates has been sampled. Here we focus on the simplest implementation of the ($1,\lambda$)-ES class, which we will refer to as  ($1,\lambda$)-ES for simplicity. A modification of that algorithm by \cite{SalmansSutskever2017} reduces the message size to about a dozen bytes regardless of the model size, by leveraging the fact that random parameter perturbations can be deterministically derived by a pseudo-random number generator from a random seed, meaning that sharing only the random seed is sufficient. Unlike gradient-based learning, ($1,\lambda$)-ES allows any worker to verify the validity of an update proposed by another vector with a single forward pass, which is the property we leverage to combine classical distributed consensus algorithms with  ($1,\lambda$)-ES to create byzantine-resilient distributed versions of  ($1,\lambda$)-ES. 

Finally, since at no point  ($1,\lambda$)-ES requires a back-propagation, it allows for non-differentiable layers, such as hard attention or deterministic rules, to be included in the model architecture \cite{HardAttention2015Bengio}. This makes it interesting even in the setting allowing for gradient-based learning because, unlike differentiable layers, deterministic rules can provide deterministic guarantees on AI model decisions, which is critical in high-stakes applications. In particular, for LLMs, it has a potential of solving the long-term instruction retaining problem, currently limiting their application \cite{SeeKeR2022Facebook, Sparrow2022Google}.

\subsection{Byzantine-Resilience in Machine Learning}

The increasing size of ML models has also made them impossible to train or even run on single machines, making model parallelization and distribution an increasingly pressing issue \cite{Pathways2022GoogleJeffDean}. With the increase of the computing nodes involved in the training, the chances for an arbitrary complex error to occur increase, making fault tolerance a prime concern. In the field of distributed computing, the tolerance to such faults is known as "Byzantine Tolerance", with the name derived from the seminal paper that introduced that type of faults \cite{ByzantineGenerals1982Lamport}.

The field of machine learning allowing for such "Byzantine" fault tolerance led to the emergence of the field of byzantine-resilient machine learning \cite{Mehdi01, Mehdi01Bis}. Unfortunately, the definition introduced in the process is specific to differentiable manifolds and focuses on the setting where every node trains the same model, only has partial access to the data, and shares non-verifiable gradients calculated on that data. By introducing novel gradient aggregation rules (GARs) for the parameter server, they were able to prove that a byzantine fault impact could be limited to at most a deviation of angle $\alpha$ on the final parameter update for the fraction $f$ of Byzantine workers (($\alpha$, $f$)-Byzantine Resilience). 

The reason authors had to introduce a new definition of byzantine-resilience rather than to re-use existing ones, is that the latter are poorly suited to the distributed learning setting. If approached from the \textit{Do-All problem} perspective \cite{DoAllProblemInByzantineContext2005}, parameter update vectors cannot be verified without repeating the whole computation, meaning that byzantine-resilience would require several workers to perform redundant update calculation. In the setting where training a single model can cost millions in electricity costs alone (cf eg \cite{GPT3Explained2020OpenAI}), direct redundancy is an unrealistic assumption.

In the distributed gradient-based learning, the ($\alpha$, $f$)-Byzantine Resilience hence remained the predominant paradigm and has been further expanded to provide guarantees for models trained in a more general distributed setting than federated learning \cite{guo2021Mozi, Yang2019Bridge, Yang2019Byrdie, Mehdi2020ByzSGD, Mehdi2021Jungle}.

\subsection{Our Contribution}

Our main contribution is to show that Evolutionary Search is can be adapted to work as a byzantine-resilient distributed optimization algorithm in a non-differentiable setting. 

Specifically, we show that by introducing a new definition of distributed consensus in the ML setting, we can leverage the existing literature on byzantine-resilient distributed computing. In turn, by using the established primitives of the total order broadcast and proof-of-work probabilistic consensus primitives \cite{OGReliableBroadcast1984, 2019NakomotoConsensusAnalysis} we propose two algorithms for distributed evolutionary search - in permissioned (closed) and permissionless (open) settings, and prove the bounds on the computational overhead imposed by distributing the Evolutionary Search.

Interestingly, our new definition of distributed consensus - the \textit{Model-Consensus} generalizes the  ($\alpha$, $f$)-Byzantine Resilience introduced by \cite{Mehdi01, Mehdi01Bis} and directly interfaces with the more general definition of computational consensus.

\section{Preliminaries}

\subsection{Learning Setting} \label{prelim:LearningSetting}

Our problem consists in learning a general function $f \in \mathbb{F}$, mapping inputs $x \in \mathbb{X}$ to outputs $y \in \mathbb{Y}$, determined by parameters $\boldsymbol{\theta}$, noted $f(. ,\boldsymbol{\theta})$\footnote{For machine learning, we follow the common notation introduced in \cite{DeepLearninBook2016Goodfellow}.}. A scalar performance metric $\mathcal{L}$ is associated to the function and can be compute for each input/output pair in the training and validation sets $\mathcal{L}(f(x, \boldsymbol{\theta}), y)$. We denote $\mathcal{L}_{\theta}$,
an aggregate performance metric on all input-output pairs. Without loss of generality, $\mathcal{L}$ can correspond conversely to loss, accuracy, total reward, fitness, or another metric of the model performance. The goal of learning is to find parameters that optimize that value. This process can be referred to as parameter optimization, parameter space search, or training. For the sake of simplicity, we adopt the convention that $\mathcal{L}$ is a loss that we seek to minimize, although, in the context of evolution, $-\mathcal{L}$ will be occasionally referred to as fitness due to historical reasons. $\{\mathcal{L}_{\theta}\}$ will be referred to as loss landscape (conversely fitness landscape). Finally,  $v\mathcal{\boldsymbol{L}}_{\theta}$ will refer to the \textit{loss vector} obtained by concatenating the losses for all the input/output pairs $(x,y) \in \mathbb{X} \times \mathbb{Y}$ in the training set for model parameters $\boldsymbol{\theta}$.
While we assume a non-differentiable setting, we still assume a smooth loss landscape, ie $\exists k \in \mathbf{R}$ such that $\forall(\boldsymbol{\theta}_i, \boldsymbol{\theta}_j), \frac{|\mathcal{L}_{\theta_i}-\mathcal{L}_{\theta_j}|}{\lVert \boldsymbol{\theta}_i - \boldsymbol{\theta}_j \rVert}<k$.

Given that we are interested in distributing the training phase,  $\mathcal{L}_{\theta}$ without further additional notation designate the aggregated performance metric on the training dataset. We assume as well that every worker has access to all of the training data and that $\mathcal{L}_{\theta}$ is computed in a deterministic manner by each worker, given $\theta$. This setting is different from the one used in distributing the gradient computation, given that the difficulty for the ($1,\lambda$)-ES algorithm is to find a valid update.


\subsection{Model-Consensus and $\epsilon$-Optimality}

In machine learning context, the consensus problem is for a set processes $p\in\Pi$ to \textit{decide} on a common value of model parameters $\boldsymbol{\theta}$ based on model values correct processes can evaluate individually $\boldsymbol{\theta}_p$. A correct process can decide on a value at most once every training session.

For the sake of readability, given the process-worker equivalence, we will be referring to processes $p$ as \textit{workers} and the ensemble of workers trying to solve the machine learning consensus problem for a given task $\Pi$ - as \textit{worker pool}.

A machine learning consensus protocol must satisfy the following conditions: 

\begin{itemize}
  \item \textit{Liveness}: Each correct worker must eventually decide on a value of $\boldsymbol{\theta}$
  \item \textit{Consistency}: No two correct workers can decide on a different  $\boldsymbol{\theta}$.
  \item \textit{Validity} (Extended): For all correct workers, only a $\boldsymbol{\theta}$ proposed by a correct process can be decided upon.
\end{itemize}

Given that ML model training is distributed to improve parameter space search, we expect the workers to propose different values $\boldsymbol{\theta}_p$, so the extended validity is essential. Moreover, we expect some parameters to correspond to a better loss value, and we want our workers to decide on a value of parameters that leads to the lowest loss possible. This leads us to introduce a new constraint on the model-consensus:
\begin{itemize}
  \item \textit{$\epsilon$-optimality}: If $\boldsymbol{\theta}$ satisfies $\mathcal{L}_{\theta} \leq min_{p \in \Pi} (\mathcal{L}_{\theta_p})+\epsilon$, where $\epsilon \geq 0$, the consensus is \textit{$\epsilon$-optimal}.
\end{itemize}

A special case of $\epsilon$-optimality is the case where $\epsilon=0$, in which case we will refer to the consensus simply as \textit{optimal}. The two algorithms we propose here are optimal for each update with high probability, whereas ($\alpha$, $f$)-Byzantine Resilience \cite{Mehdi01, Mehdi01Bis} is $\epsilon$-optimal with $\epsilon=sin(\alpha) \cdot lr \cdot k_{lipshitz}$, where $lr$ is the effective learning rate and $k_{lipshitz}$ - the Lipschitz constant of the loss landscape.


\subsection{($1,\lambda$)-ES Algorithm}
\label{coreAlgSection}

Similar to SGD, algorithms of the ($1,\lambda$)-ES class search for optimal model parameters $\theta$ through a series of steps performing an empirical descent of the loss landscape \cite{kacprzykEvolution2015}. At each step $i$, the value of the parameters $\theta_i$ is perturbed by a vector $\boldsymbol{\beta}$\footnote{Other works tend to use $\epsilon$ to denote it, whereas we use a greek letter close to the neighborhood notation in topology to avoid confusion with $\epsilon$ of $\epsilon$-optimality} sampled from a normal distribution $\mathcal{N}(0, I)$ and scaled to a learning rate $\sigma$. A number ($k \in [1, .., N]$) of $\boldsymbol{\beta}$ values are tested. The one that improves the model the most ($k_{update} = arg \min_{k}{\mathcal{L}(\theta_i + \sigma \boldsymbol{\beta_k})}$) is retained and becomes the base for the next search ($\theta_{i+1}:=\theta_i + \sigma \boldsymbol{\beta_{k_{update}}}$). We refer to $\sigma \boldsymbol{\beta_{k_{update}}}$ as an \textit{update vector}, $\sigma \boldsymbol{\beta_k}$ as \textit{candidate update vectors} and $\boldsymbol{\theta}_i + \sigma \boldsymbol{\beta_k}$ as \textit{candidate parameters}. No update will occur if no tested vectors have improved loss, so only vectors such as $\mathcal{L}(\theta_i + \sigma \boldsymbol{\beta_k}) < \mathcal{L}(\theta_i) + \nu$, where $\nu \geq 0$ is a parameter controlling for a trade-off between random noise due to sampling and gradient descent - a minimal improvement to be achieved before an update is triggered. We will refer to $\sigma \boldsymbol{\beta_k}$ for which this property is true a \textit{valid update vector}.




\subsection{Adapting ($1,\lambda$)-ES for Distributed Setting}

As we mentioned in the introduction, an important improvement to the ($1,\lambda$)-ES is for nodes to share only the random seeds used to derive candidate update vector deterministically with a pseudo-random numbers generator, allowing update sharing with short messages (given that randoms seeds are $<$16 bytes for most ML libraries), and once bundled with the loss parameter, allows any correct node to verify the proposed candidate update vector. 
In all that follows, we will assume that mode of derivation and refer to such a random seed as a \textit{candidate update vector seed}, noted as $\mathfrak{S}_{\beta_k}$. Formalizing the section above, we assume as well that we have an access to a random generator that is capable to turn a random seed into a non-scaled update vector ($RG: \mathfrak{S}_{\beta_k} \to \beta_k$).

To facilitate the proofs for the permissionless setting, we introduce an additional modification of the ($1,\lambda$)-ES algorithm that is run by the workers $p$. Specifically, to ensure strategy-proofness and more closely match existing proof-of-work, we add a combined hashing of the loss and update vector seed, assumed to be a positive integer below a certain maximal value $\mathfrak{B}_{max}$ (eg. the largest integer that can be encoded with the number of bits in a hash). We refer to the hash of ($\mathfrak{S}_{\beta_k}$, $\mathcal{\boldsymbol{L}}_{\theta_k}$) as \textit{$\theta$-block score} $\mathfrak{B}_{\theta_k}$. In the proof-of-work consensus, it is used as a scheduler for leader election, which is triggered when $\mathfrak{B}_{\theta}<\mathfrak{B}_{target}$, where $\mathfrak{B}_{target}$ is the value set to control the frequency of leader election given the size of the worker pool and the frequency of evaluation.

The pseudo-code for the complete evolutionary search algorithm is presented in the listing \ref{ESCode}.






\begin{lstlisting}[caption={Single Worker Evolutionary Search},
label=list:8-6,
captionpos=b,
float,
abovecaptionskip=\medskipamount,
mathescape=true,
label={ESCode}]
Abstraction: 
  EvolutionarySearcher, Instance es
  
Interface:
  - Request <es.Start | $\theta_i$>: starts search
  - Request <es.Stop>: ends search
  - Indication <es.BestHash | $\theta_i$, $\mathfrak{S}_{\beta_k}$, $\mathcal{L}_{\theta_i+\sigma\beta_k}$>:
    a new seed with a valid hash was found
  - Indication <es.BestLoss| $\theta_i$, $\mathfrak{S}_{\beta_k}$, $\mathcal{L}_{\theta_i+\sigma\beta_k}$>:
    a new valid update vector seed was found
  - Procedure es.follow($\theta_i$, $\mathfrak{S}_{\beta_k}$) -> $\theta_i + \sigma \beta_k$
    derive candidate parameters for a seed
  - Procedure es.evaluate($\theta_i$, $\mathfrak{S}_{\beta_k}$) -> 
  ($\mathfrak{B}_{\theta_i + \sigma \beta_k}$, $\mathcal{L}_{\theta_i + \sigma \beta_k}$): evaluates the
    hash and loss of a candidate seed
    
Algorithm:
  Implements:
    EvolutionarySearcher, instance es;
  Parameters:
    $\mathcal{L}$: loss function;
    $\sigma$: search radius;
    $\nu$: minimal loss score improvement;
    $\mathfrak{B}_{target}$: target hash threshold;
  procedure reset():
    target = $\emptyset$;
    best_hash = {seed: $\emptyset$, score: $+\infty$};
    best_loss = {seed: $\emptyset$, score: $+\infty$};
  upon <es.Start | $\theta_i$>:
    reset();
    target =  $\theta_i$
  upon <es.Stop>:
    reset();
  procedure es.follow($\theta_i$, $\mathfrak{S}_{\beta_k}$):
    If $\mathfrak{S}_{\beta_k}$ == $\emptyset$: 
      return $\theta_i$;
    Else:
      $\beta_k$ = RG($\mathfrak{S}_{\beta_k}$);
      return $\theta_i + \sigma \beta_k$;
  procedure es.evaluate($\theta_i$, $\mathfrak{S}_{\beta_k}$):
    $\mathcal{L}_{\theta_i + \sigma \beta_k}$, $v\mathcal{\boldsymbol{L}}_{\theta_i + \sigma \beta_k}$ = eval($f_{\theta_i + \sigma \beta_k}$);
    $\mathfrak{B}_{\theta_i + \sigma \beta_k}$ = hash($\mathcal{L}_{\theta_i + \sigma \beta_k}$, $v\mathcal{\boldsymbol{L}}_{\theta_i + \sigma \beta_k}$);
    return ($\mathfrak{B}_{\theta_i + \sigma \beta_k}$, $\mathcal{L}_{\theta_i + \sigma \beta_k}$);
  upon target != null:
    seed = rand();
    $\mathfrak{B}_{\theta_i + \sigma \beta_k}$, $\mathcal{L}_{\theta_i + \sigma \beta_k}$ = es.evaluate($\theta_i$, $\mathfrak{S}_{\beta_k}$);
    If $\mathfrak{B}_{\theta_i + \sigma \beta_k}<\mathfrak{B}_{target}$:
      best_hash = {seed: $\mathfrak{S}_{\beta_k}$, score: $\mathfrak{B}_{\theta_i + \sigma \beta_k}$};
      trigger <es.BestHash | $\theta_i$, $\mathfrak{S}_{\beta_k}$, $\mathfrak{B}_{\theta_i + \sigma \beta_k}$;
    If $\mathcal{L}(\theta_i + \sigma \beta_k) < \mathcal{L}(\theta_i) + \nu$:
      best_loss = {seed: $\mathfrak{S}_{\beta_k}$, score: $\mathcal{L}_{\theta_i + \sigma \beta_k}$};
      trigger <es.BestLoss | $\theta_i$, $\mathfrak{S}_{\beta_k}$, $\mathcal{L}_{\theta_i+\sigma\beta_k}$>;
\end{lstlisting}

\section{Permissioned Distributed Evolutionary Search}



The intuition behind the permissioned setting is to leverage the verifiability of proposed update vectors in ($1,\lambda$)-ES to re-use existing results in classical distributed algorithms. Specifically, given the iterative nature of ($1,\lambda$)-ES, we need the total order broadcast to be able to order the iterative steps between all correct workers.

\begin{lstlisting}[caption={Permissioned Distributed Search},
label=list:8-6,
captionpos=b,
float,
abovecaptionskip=\medskipamount,
mathescape=true,
label={PermissionedCode}]
Abstraction: 
  PermissionedEvolutionarySearch, instance ps
  
Uses:
  - EvolutionarySearcher, instance es,
      parameters ($\mathcal{L}$, $\sigma$, $\nu$, _)
  - TotalOrderBroadcast, instance tob
  
Interface:
  - Indication <ps.Output | point>:
    parameters found by the evolutionary search

Algorithm:
  Implements:
     PermissionedEvolutionarySearch, instance ps
  Parameters:
    $\mathcal{L}$: loss function;
    $\sigma$: search radius;
    $\nu$: loss threshold for update;
    $\theta_0$: starting point of the search;
    Z: number of search steps;
  upon <ps.Init>:
    target = $\theta_0$;
    steps = 0;
    trigger <es.Start | target>;
  upon <es.BestLoss | $\theta_i$, $\mathfrak{S}_{\beta_k}$, $\mathcal{L}_{\theta_i+\sigma\beta_k}$>:
    If $\theta_i$ == target And $\mathcal{L}(\theta_i + \sigma \beta_k) < \mathcal{L}(\theta_i) + \nu$:
      trigger <tob.Broadcast |
               ["ValidLoss", $\theta_i$, $\mathfrak{S}_{\beta_k}$]>;
  upon <tob.Deliver | 
        source_es ["ValidLoss", $\theta_i$, $\mathfrak{S}_{\beta_k}$]>:
    If $\theta_i$ == target:
      (_, $\mathcal{L}_{\theta_i + \sigma \beta_k}$) =
      es.evaluate($\theta_i$, $\mathfrak{S}_{\beta_k})$)
        // verify that the seed is valid indeed
      If $\mathcal{L}_{\theta_i + \sigma \beta_k} < \mathcal{L}_{\theta_i}+\nu$:
        target = es.follow($\theta_i$, $\mathfrak{S}_{\beta_k}$);
          // is actually $\theta_{i+1}=\theta_i + \sigma \beta_k$
        steps = steps + 1;
        If steps < Z:
          trigger <es.Start | target>;
        Else:
          trigger <es.Stop>;
          trigger <ps.Output | target>;
      Else:
        trigger <tob.Ban | source_es>;
          // optional penalty for misbehaving
\end{lstlisting}

\begin{theorem} \label{theorem:permissioned}
The algorithm in listing \ref{PermissionedCode} implements a machine learning consensus protocol that is Byzantine-resilient under the same assumptions as the Total Order Broadcast algorithm used, and is optimal with probability bound from above by $\frac{\Delta |\Pi|}{N T_{eval, average}}$, where $\Delta$ is the time needed to perform a Total Order Broadcast, $N$ - the expected number of tries to find a valid update seed and $T_{eval,average}$ is the average time needed by a worker to evaluate a candidate update seed.
\end{theorem}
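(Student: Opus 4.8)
The plan is to split Theorem~\ref{theorem:permissioned} into two parts that can be argued independently: (i) that the protocol of Listing~\ref{PermissionedCode} satisfies the four model-consensus properties introduced above and inherits Byzantine-resilience verbatim from the underlying Total Order Broadcast (TOB), and (ii) the probabilistic upper bound on the per-update non-optimality. The first part is essentially a reduction to TOB plus the verifiability of ($1,\lambda$)-ES updates; the second is a small queuing/arrival-rate computation.

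For the consensus part I would argue by induction on the step counter. All correct workers start with \texttt{target} set to $\theta_0$, so they agree at step $0$. Assuming they agree on $\theta_i$, the total-order property of TOB guarantees that every correct worker delivers the same messages in the same order, hence processes the \emph{same} first \texttt{ValidLoss} message carrying $\theta_i$. Since each worker re-derives $\beta_k$ from the seed through $RG$ and recomputes $\mathcal{L}_{\theta_i+\sigma\beta_k}$ deterministically (using the determinism and full-data-access assumptions), they either all accept it and set \texttt{target} to the identical $\theta_{i+1}=\theta_i+\sigma\beta_k$, or all reject and identically ban the source; every later \texttt{ValidLoss} for the now-stale $\theta_i$ fails the current-target guard at every correct worker, so no divergence arises. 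Applying this at step $Z$ yields a single common \texttt{ps.Output}, giving Consistency. Extended Validity follows from the same re-verification: any committed $\theta_{i+1}$ genuinely satisfies $\mathcal{L}_{\theta_i+\sigma\beta_k}<\mathcal{L}_{\theta_i}+\nu$ as checked by every correct worker, so a Byzantine proposer can never impose an update that a correct worker would not itself certify, and invalid proposals are filtered (and penalised) irrespective of their origin. Liveness follows from TOB validity together with the assumption that at each reachable target some correct worker samples a valid update with positive probability: that worker eventually broadcasts it, TOB delivers it to all, the step advances, and after $Z$ advances every correct worker decides. Byzantine-resilience then holds under exactly the TOB fault assumptions, because the only Byzantine levers---reordering messages and injecting invalid updates---are defeated by TOB agreement and by deterministic re-verification respectively.

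For the optimality bound I would model valid-update discovery as a continuous-time arrival process for the whole pool. A single worker evaluates one candidate every $T_{eval,average}$ on average, and each evaluation is valid independently with probability $1/N$, so one worker discovers valid updates at rate $1/(N\,T_{eval,average})$ and the pool of $|\Pi|$ workers at aggregate rate $\lambda=|\Pi|/(N\,T_{eval,average})$, which I treat as Poisson. A step is committed by the first \texttt{ValidLoss} that TOB delivers; thereafter \texttt{target} advances and every proposal for the old target is discarded. Hence a higher-loss update can be committed only if a second valid update is discovered during the broadcast window of length at most $\Delta$ that opens when the first one is found---otherwise exactly one proposal is live at commit time and it is trivially the $\mathcal{L}$-minimiser over the live proposals, i.e. the update is optimal. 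The probability that the rate-$\lambda$ process produces at least one further arrival in a window of length $\Delta$ is $1-e^{-\lambda\Delta}\le \lambda\Delta = \Delta|\Pi|/(N\,T_{eval,average})$, which is exactly the claimed upper bound on the per-update non-optimality probability.

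The main obstacle I anticipate is not the algebra but justifying the probabilistic model cleanly: I must argue that per-worker evaluations are (approximately) independent and memoryless, so the pooled discovery process is Poisson with rate $\lambda$, and, more delicately, that ``no second arrival within $\Delta$'' genuinely implies optimality---that a lone in-flight update is the loss-minimiser among the proposals the correct workers contribute during that step, which forces me to fix precisely which proposals enter the $\min_{p\in\Pi}\mathcal{L}_{\theta_p}$ comparison (the live, in-flight ones at commit time). The reduction from ``sub-optimal commit'' to ``conflicting arrival inside the broadcast window,'' together with the inequality $1-e^{-\lambda\Delta}\le\lambda\Delta$, are the load-bearing steps; note that even two in-flight updates need not yield a sub-optimal outcome, so $\lambda\Delta$ is a genuine over-estimate rather than an equality, consistent with the ``bounded from above'' phrasing of the statement.
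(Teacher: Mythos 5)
Your proposal is correct and follows essentially the same route as the paper: the consensus properties are obtained by reduction to the Total Order Broadcast plus deterministic re-verification of seeds, and the optimality bound comes from counting the valid-update discoveries the pool can make during the broadcast window of length $\Delta$, yielding $\frac{\Delta|\Pi|}{N\,T_{eval,average}}$. Your Poisson formalization with $1-e^{-\lambda\Delta}\le\lambda\Delta$ and your explicit induction on the step counter are just cleaner, more rigorous renderings of the paper's own counting argument (number of evaluations in the window times the per-evaluation probability $1/N$) and of its appeal to identical delivery order.
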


\begin{proof}
The Total Order Broadcast ensures that the valid update seeds $\mathfrak{S}_{\beta_k}$ are delivered to all correct workers in the same order, after the workers were initialized to the same starting parameters value $\theta_0$. Assuming that a valid update seed exists $\forall i \in [0..Z-1]$, it will be eventually found and broadcasted by a correct worker.

Liveness: Each correct worker will eventually decide on the final $\theta_Z=\theta_0+\sum_{i=0}^{Z-1} \sigma \beta_{i, first}$, where $\sigma \beta_{i, first}$ is the update vector derived from the first valid update seed for point $\theta_i$.

Consistency: Thanks to the Total Order Broadcast, $\forall i \in [0..Z-1] \beta_{i, first}$ are same values for all workers and hence for each correct worker the final parameters of the model $\theta_Z=\theta_0+\sum_{i=0}^{Z-1} \sigma \beta_{i, first}$ are identical.

Extended Validity: By construction, the first correct worker to have its proposed seed successfully broadcast will have its update vector $\sigma\beta_{i, first}$ accepted. A seed that has not been successfully broadcasted cannot be accepted.

Probabilistic Optimality: By construction, at every step, upon the reception of a valid update seed $\mathfrak{S}_{\beta_{i,first}}$ through Total Order Broadcast, a correct worker will switch to searching for a valid update vector for the new parameters $\theta_i+1 = \theta_i + \sigma \beta_{i, first}$. The only way a better update at a given step becomes available without being broadcast first is if one becomes available during the total broadcast. The probability of that happening is proportional to the number of seed evaluations occurring before the broadcast completes times the probability of finding a seed above the threshold and better than the seed in the broadcast. The former is bound by the number of evaluations a worker pool can perform during the broadcast ($\frac{{|\Pi|} \Delta}{T_{eval,average}}$), whereas the second is bound by the chance of finding a valid seed, which, in case if the seed in broadcast is equal exactly to the validity threshold is $\frac{1}{N}$.

\end{proof}

\section{Permissionless Distributed Evolutionary Search}

\subsection{Proof-of-Work Mechanism for Probabilistic Consensus}

The probabilistic consensus algorithm through proof-of-work (PoW) was initially proposed in the Bitcoin blockchain whitepaper \cite{NakamotoBitcoin2008}, as a mechanism to achieve a probabilistic consensus through a leader election process tied to the amount of computational power actively committed to the election process. The principle of the election mechanism leverages the cryptographically secure hash function partial inversion. Based on the information provided by the prior leader election (often the hash of the prior block head), information to be broadcast by the next leader (often the root of the Merkle tree of transactions to be cleared), correct workers try to guess a random string (nonce) that once added to those two values would lead to a hash in the desired domain (for simplicity, $0 < \mathfrak{B} < \mathfrak{B}_{target} < \mathfrak{B}_{max}$). In turn, once a node finds a valid nonce, its leadership can be validated by other nodes by performing a single hash with the found nonce. This process is referred to as "mining" and each new leader election as a "block minting", and assuming sufficient time between leader elections to allow the previous block value to propagate ($\mathfrak{B}_{target}$ is adjusted based on the number of workers for that reason), ensuring an eventual election of a correct worker as a leader with high probability, assuming that the majority of computational power is controlled by correct workers \cite{2019NakomotoConsensusAnalysis, 2019NakamotoConsensus2}.

Unfortunately, the increasing popularity of PoW-based blockchains led to a combination of a large number of computationally powerful workers joining it and consequently to the difficulty threshold being increased to the point where PoW became a serious environmental problem \cite{BitcoinCarbonCellPress2019}. This led to heavy criticism of PoW consensus and other protocols - such as proof-of-stake \cite{ProofOfStakeOG2012} - to be promoted as less harmful alternatives for permissionless distributed consensus.

An alternative approach consisted in trying to highjack the proof of work to instead perform some useful work that would absorb computational resources independently of PoW-based blockchains. Such algorithms -  \textit{useful proof-of-work} (UPoW) - has unfortunately been hard to find, given the volume of computational power currently invested into PoW they would need to absorb and strict constraints on PoW to be usable: provably hard-to-find easy-to-verify updates, low communication complexity, and message weight and easily adjustable puzzle difficulty.

\subsection{Permissionless Distributed Evolutionary Search as Proof-of-Work}

However, given the ever-growing demand for computational power in machine learning, parameter space search problems are sufficiently common to leverage the computational power available to PoW consensus algorithms. Conversely, the distributed ($1,\lambda$)-ES seems to fit the constraints on the UPoWs, given that while hard to find, valid update vectors are straightforwards to validate and that communication overhead in-between iterative steps of ($1,\lambda$)-ES only contains  ($\mathfrak{S}_{\beta_k}$, $\mathcal{\boldsymbol{L}}_{\theta_k}$) - candidate update vector seed and associate loss.

To simplify the proofs and enable a direct mechanism for complexity adjustment, rather than using a valid update itself as a proof for leader election, we instead use the $\theta$-block score $\mathfrak{B}_{\theta_k}$, while propagating the best found valid update seed and associated loss ($\mathfrak{S}_{\beta_{k_{update}}}$, $\mathcal{L}_{\beta_{k_{update}}}$) with the same mechanisms as Merkel tree roots. Intuitively, this is a distributed equivalent of ($1,\lambda$)-ES with a set sampling population, except with the size decided by the expected candidate update samples between leader elections.

Given the variety of available blockchain protocols, we will abstract them away in the same we abstracted the total order broadcast in the permissioned setting and assume they implement an interface described in listing \ref{BlockchainInterface}.

\begin{lstlisting}[caption={Permissionless Distributed Search},
label=list:8-6,
captionpos=b,
float,
abovecaptionskip=\medskipamount,
mathescape=true,
label={PermissionlessCode}]
Abstraction: 
  PermissionlessEvolutionarySearch, instance ps
  
Uses:
  - EvolutionarySearcher, instance es, 
      parameters ($\mathcal{L}$, $\sigma$, $\nu$, $\mathfrak{B}_{target}$)
  - Blockchain, instance bl

Interface:
  - Indication <ps.Output | point>:
    parameters found by the evolutionary search
  - Procedure ps.processNewBlock([$\theta_i$, target, steps, 
    es.best_hash, es.best_loss])

Algorithm:
  Implements:
     PermissionlessEvolutionarySearch, instance ps
  Parameters:
    // same as in permissioned
  upon <ps.Init>:
    // same as in permissioned
  procedure processNewBlock([$\theta_i$, target, steps, 
    es.best_hash, es.best_loss]):
    If steps < Z:
      trigger <es.Start | target>;
    Else:
      trigger <es.Stop>;
      trigger <ps.Output | target>;
      trigger <bl.loadNext>;
  upon <es.BestLoss | $\theta_i$, $\mathfrak{S}_{\beta_k}$, $\mathcal{L}_{\theta_i+\sigma\beta_k}$>:
    If{$\theta_i$ == target And $\mathcal{L}(\theta_i + \sigma \beta_k) < \mathcal{L}(\theta_i) + \nu$:
      trigger <bl.sendValue |
               ["ValidLoss", $\theta_i$, $\mathfrak{S}_{\beta_k}$, $\mathcal{L}(\theta_i + \sigma \beta_k)$]>;
  upon <bl.deliverValue | 
        source_es ["ValidLoss", $\theta_i$,
        $\mathfrak{S}_{\beta_k}$, $\mathcal{L}_{declared}(\theta_i + \sigma \beta_k)$]>:
    If $\theta_i$ == target
        And $\mathcal{L}_{declared}(\theta_i + \sigma \beta_k)$ < es.best_loss[score]:
      (_, $\mathcal{L}_{validated}{\theta_i + \sigma \beta_k}$) =
      es.evaluate($\theta_i$, $\mathfrak{S}_{\beta_k})$);
        // verify that the sender is not lying
      If $\mathcal{L}_{declared}(\theta_i + \sigma \beta_k)$ == $\mathcal{L}_{validated}(\theta_i + \sigma \beta_k)$:
        es.best_loss = {seed: $\mathfrak{S}_{\beta_k}$, score: $\mathcal{L}_{\theta_i + \sigma \beta_k}$};
        trigger <bl.sendValue | 
          ["ValidLoss", $\theta_i$, $\mathfrak{S}_{\beta_k}$, $\mathcal{L}(\theta_i + \sigma \beta_k)$]>;
  upon <es.BestHash | $\theta_i$, $\mathfrak{S}_{\beta_k}$, $\mathfrak{B}_{\theta_i + \sigma \beta_k}$>:
    If $\theta_i$ == target And $\mathfrak{B}_{\theta_i + \sigma \beta_k} < \mathfrak{B}_{target}$:
      target = es.follow($\theta_i$, es.best_loss[seed]);
      steps += 1;
      trigger block = <bl.mintBlock | 
                       [$\theta_i$, target, steps,
                       es.best_hash, es.best_loss]>;
      trigger <bl.propagateBlock | block>;
      ps.ProcessNewBlock(block);
    upon <bl.deliverBlock |
          source_es [$\theta_i$, target, steps,
          source_es.best_hash, source_es.best_loss]>:
      If $\theta_i$==target:
        ($\mathfrak{B}_{\theta_i + \sigma \beta_k}$, _) =
          es.evaluate($\theta_i$, source_es.best_hash[seed]);
        If{$\theta_i$ == target And $\mathfrak{B}_{\theta_i + \sigma \beta_k} < \mathfrak{B}_{target}$:
          target = target;
          steps = steps;
          trigger <bl.propagateBlock | block>;
          ps.ProcessNewBlock(block);
\end{lstlisting}

\begin{lstlisting}[caption={Expected Blockchain interface},
label=list:8-6,
captionpos=b,
float,
abovecaptionskip=\medskipamount,
mathescape=true,
label={BlockchainInterface}]
Abstraction: 
  Blockchain, instance bl
  
Interface:
  - Procedure bl.sendValue: 
    worker proposes to its neighbours a value to be
    included in the next block
  - Procedure bl.deliverValue: 
    delivers a value proposed for inclusion into the
    next block from a neighbour
  - Procedure bl.mintBlock: 
    allows a worker to mint a new block that would
    include the best valid updates received
  - Procedure bl.propagageBlock: 
    allows a worker to suggests a newly found block
    to be propagated a neighbours
  - Procedure bl.deliverBlock: 
    delivers a block proposed for propagation from
    a neighbour
  - Procedure bl.loadNext: 
    loads the next distributed search task queued
    in blockchain
\end{lstlisting}

\begin{theorem}
Algorithm in Listing \ref{PermissionlessCode} is a valid proof-of-work and is a machine learning consensus protocol optimal with probability bound from above by $\frac{\Delta |\Pi|}{N T_{eval,average}}+e^{\Omega(\delta^2 g^2 d)}$ $\forall \delta > 0$ if $g^2\alpha > (1+\delta)\gamma$ after $d$ block added on top of the block minted during the step Z. $\Delta$ and is the time needed to propagate a block or a value, respectively\footnote{Given that the propagation of a value and block involves evaluating a candidate update, depending on the neighbor propagation topology, $Delta_{block/val}$ can be $O(|\Pi| \cdot T_{eval,average})$, $O(log(|\Pi|) \cdot T_{eval,average})$ or $O(T_{eval,slowest})$. For the sake of generalizability, we keep the same notation as previously}, $N$ - the expected number of tries to find a valid update seed, and $T_{eval,average}$ is the average time needed by a worker to evaluate a candidate update seed, $\alpha$ and $\gamma$ - collective minting rates of correct and faulty nodes and $g=e^{-\alpha\Delta}$, the propagation delay penalty for correct nodes. 
\end{theorem}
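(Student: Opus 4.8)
The plan is to split the statement into three claims---that the protocol of Listing~\ref{PermissionlessCode} is a valid (useful) proof-of-work, that it satisfies the machine learning consensus properties, and that it meets the two-term optimality probability bound---and to reduce the first two to known facts (the UPoW constraints and the Nakamoto consensus analysis), concentrating the genuinely new work on the probability estimate.

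First I would verify the proof-of-work property against the UPoW constraints discussed above. The $\theta$-block score $\mathfrak{B}_{\theta_k}=\mathrm{hash}(\mathcal{L}_{\theta_k}, v\mathcal{\boldsymbol{L}}_{\theta_k})$ is produced by a cryptographically secure hash, so finding a seed with $\mathfrak{B}_{\theta_i+\sigma\beta_k}<\mathfrak{B}_{target}$ requires guessing candidate seeds one at a time (hard to find), while any worker verifies it with a single \texttt{es.evaluate} call (easy to verify); the difficulty is tuned directly through $\mathfrak{B}_{target}$, and each message carries only $(\mathfrak{S}_{\beta_k},\mathcal{L}_{\theta_k})$, bounding communication weight and complexity. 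Crucially, because the hash binds \emph{both} the seed and the loss vector, a faulty worker cannot transplant a winning hash onto a falsified loss, which supplies strategy-proofness. These four observations establish the protocol as a legitimate useful proof-of-work. Granted this, the blockchain abstraction \texttt{bl} inherits the guarantees of Nakamoto consensus \cite{2019NakomotoConsensusAnalysis, 2019NakamotoConsensus2}: \emph{liveness} and \emph{consistency} follow from its eventual-agreement and common-prefix properties applied block-by-block across the $Z$ steps under honest-majority, and \emph{extended validity} follows exactly as in Theorem~\ref{theorem:permissioned}, since a block is minted only around a valid update seed that a correct worker has re-validated via \texttt{es.evaluate}. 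The sole new feature is that agreement is now probabilistic, which is what the second term of the bound charges for.

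The crux is the optimality probability, which I would bound from above by a union bound over two failure events. The first---a strictly better valid update becomes available during propagation but is not the seed carried by the minted block---is structurally identical to the permissioned analysis: bounding the number of candidate evaluations a pool of $|\Pi|$ workers performs within a propagation window $\Delta$ by $\frac{|\Pi|\Delta}{T_{eval,average}}$, and the per-seed success probability by $\frac{1}{N}$, reproduces the term $\frac{\Delta|\Pi|}{N T_{eval,average}}$. The second event is a chain reorganization that reverts the block minted at step $Z$ after $d$ confirmations. Here I would model honest and adversarial minting as Poisson processes of rates $\alpha$ and $\gamma$, discount the honest rate by the propagation-loss factor $g=e^{-\alpha\Delta}$ (the probability that no competing honest block appears in a window $\Delta$), and note that an honest block counts as a convergence opportunity only when isolated in the windows immediately before \emph{and} after it, which is the source of the $g^2$ factor. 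A Chernoff concentration bound on the honest-minus-adversarial convergence count over $d$ blocks, valid under the margin condition $g^2\alpha>(1+\delta)\gamma$, then yields the tail $e^{\Omega(\delta^2 g^2 d)}$. Summing the two events gives the stated bound.

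I expect the main obstacle to be justifying the second event rather than computing it. One must argue that the per-step block-minting process is memoryless---so that $\alpha$ and $\gamma$ are well-defined and the analysis of \cite{2019NakomotoConsensusAnalysis} transfers verbatim---even though the useful payload (the current parameters $\theta_i$ and the carried update seed) changes from step to step. The delicate point is to show that conditioning on the search trajectory does not correlate successive mining attempts in a way that breaks the independence underlying the Chernoff bound, and that an adversary gains nothing by grinding over loss values instead of seeds; the latter is precisely where the hash-binding established in the proof-of-work part is reused.
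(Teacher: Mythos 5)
Your proposal follows essentially the same route as the paper's proof: establish that block minting is a valid proof-of-work, invoke the cited Nakamoto-consensus analysis for the $e^{\Omega(\delta^2 g^2 d)}$ reorganization term under the condition $g^2\alpha > (1+\delta)\gamma$, and obtain the consensus properties and the $\frac{\Delta|\Pi|}{N T_{eval,average}}$ term by substituting the blockchain's agreed block sequence for the Total Order Broadcast in Theorem~\ref{theorem:permissioned}, with the final bound being a union over exactly the same two failure events the paper identifies. The only difference is one of detail, not of structure: where the paper asserts PoW validity in one sentence and cites \cite{2019NakomotoConsensusAnalysis, 2019NakamotoConsensus2} as a black box, you explicitly check the UPoW constraints and sketch the Poisson/Chernoff machinery inside the citation, which is a faithful unpacking rather than a different argument.
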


\begin{proof}
The algorithm in the listing \ref{PermissionlessCode} is a valid proof-of-work because the block minting mechanism is equivalent to a partial inversion of a cryptographic function with an unavoidable loss function evaluation overhead.

Since the algorithm in the listing \ref{PermissionlessCode} is a valid proof-of-work, the Nakamoto consensus regarding the block propagated at the step Z of ps after $d$ blocks were added on top of it will not change with probability $1-e^{\Omega(\delta^2 g^2 d)}$ for any $\delta > 0$ as long as $g^2\alpha > (1+\delta)\gamma$, as per \cite{2019NakomotoConsensusAnalysis, 2019NakamotoConsensus2}.

The Nakamoto consensus blockchain blocks are available to all correct workers and are ordered in a unique way for all correct workers. By replacing Total Order Broadcast in the proof of by the blockchain segment read containing blocks corresponding to steps 0 to Z, the proof for \ref{theorem:permissioned} applies.





\end{proof}

The intuitive explanation of proof is that the algorithm in listing \ref{PermissionlessCode} will fail to register the best random seed with the best candidate update loss in only two cases. First, if the blockchain forked and the bock with the best candidate update random seed finding itself on the dead branch. This case occurs with the probability $1-e^{\Omega(\delta^2 g^2 d)}$. Second, if the candidate update seed with the best loss is found within the time $\delta$ from the block update, accounted for by term $\frac{\Delta |\Pi|}{N T_{eval,average}}$. While this is possible if the task supplied is too easy for the size of the blockchain, there is likely a tighter bound, given that if many valid update seeds were found as a single block was mined, the last one is not necessarily the one that will reach the best loss.








\section{Discussion and Related Work} \label{relatedWork}

\subsection{Byzantine-Resilient Distributed Machine Learning} \label{BRDML}

As we mentioned in the introduction, while multiple mechanisms to distribute machine learning algorithms were proposed, the only one accounting for Byzantine faults is the ($\alpha,f$)-Byzantine Resilient learning proposed by \cite{Mehdi01, Mehdi01Bis}. While initially formulated in the context of federated learning with a centralized parameter aggregation server, it has been developed to allow byzantine fault tolerance in a general distributed setting, under realistic assumptions \cite{Mehdi2021Jungle}.

Compared to our approach, a downside of the ($\alpha,f$)-Byzantine Resilience is their dependence on direct gradients and a requirement for the model to be sufficiently small for the gradient vector sharing to not become a bottleneck. Modern ML and AI models are often sufficiently large for data transfer time to not be negligible and need to be synchronized often enough to avoid parameter drift between models. For instance, the GPT-3 generative language model has several hundreds of GBs of parameters, and even when split into single attention heads requires each parameter synchronization to transfer several GBs of parameter values.

Conversely, an advantage of ($\alpha,f$)-Byzantine Resilience is its ability to train on the data that's different on each worker node, assuming the data is uniform across workers. While this assumption is not necessarily true in all settings, our distributed ($1,\lambda$)-ES does not natively support the data distribution\footnote{Given the linear nature of the loss wrt to data, it is possible to design variants of distributed ($1,\lambda$)-ES that would require a sufficient number of nodes to confirm that a candidate update vector achieves an acceptable loss improvement on their data as well. This would however require additional assumptions and a more restrictive learning setting and is out of the scope of this paper.
}. The main problem addressed by distributing ($1,\lambda$)-ES is the difficulty to find a single valid update at each step of the optimization task, particularly in cases where the parameter space is in a very high dimension, leading to long valid update vector search, even when single candidate update vector evaluation is itself fast. This setting is specifically the one in which EAs have been successfully applied in modern machine learning \cite{CluneMiikkulainen2019, 2017UberGeneticAlgos, JeffCluneNeurIPS2018}. Notably, the permissionless UPoW ($1,\lambda$)-ES version only makes sense in that setting, complemented with a high iteration number to ensure that most of the communication is carrying only candidate update seeds and associated loss values and minimize training data transfer.

Finally, while there are other distributed approaches to gradient-free optimization, notably for Support Vector Machines (SVMs) \cite{Vapnik1995SVMsEnglish, ConsensusSVM2010} and the Genetic Algorithm \cite{Goldberg1989GAs, DistributedGA1995}, none to our knowledge allow for byzantine faults.



\subsection{Useful proof of work} \label{PoUWREview}

Given that the computational and energy costs of PoW consensus were already well-known by mid-2010s \cite{BitcoinCarbonCellPress2019}, multiple attempts were made to leverage the PoW for useful work. The first proposal was made in 2017, leveraging a set of problems in computational geometry for which the search of a solution is $~O(n^2)$ hard and verification is $~O(n)$ hard \cite{Ball2017ProofOfUsefulWork}. Unfortunately, insufficient demand and lack of difficulty adjustment mechanism meant that this PoW was impractical. The same year another set of problems - partitioned linear algebra on very large non-sparse matrices was proposed by \cite{Shoker2017PODCProofOfExercise}. Unfortunately, that PoW did not take either, given the rarity of problems involving such matrices and amounts of data (TBs) that would need to be transferred in the process.
Finally, still the same year, a highly general framework for turning any computationally intensive task into PoW challenges has been proposed \cite{ResourceEfficientMining2017}. Relying on the trusted hardware - Intel SGX - it initially showed a great performance but was rapidly rendered obsolete by the rarity of hardware supporting Intel SGX and then the demise of Intel SGX in the wake of Spectre vulnerabilities \cite{CVE-2017-5753, CVE-2017-5715}.

The next iteration of the search for a useful proof of work started in 2019, focusing on NP-hard problems, notably the traveling salesman problem and machine learning. While some success has been achieved in by using TSP in the context of the container ship sailing route optimization \cite{Haouari2022ContainerTSPPoUW}, despite being NP-hard TSP has several probabilistic heuristics available and no initial estimation of hardness for a specific problem, making it non-strategy-proof and hence not a suitable PoW for blockchain purposes.

Machine learning PoW has been attempted as well, however, all the approaches we are aware of tried to use gradient-based machine learning and ran into the fundamental issue of non-verifiability of gradient calculation, leading them to waste resources through replication or to leave their PoW non-strategy-proof and often to have the model training itself to be vulnerable to adversaries. Examples of such approaches are Proof of Learning (PoLe) \cite{Liu2021ProofOfLearning}, which is essentially a race to a predefined accuracy, and model hyperparameter sweep PoW \cite{Baldominos2019CoinAI}, neither containing replication, leaving ML model vulnerable to attackers, and not being strategy-proof, given that a worker with more computational resources or a good heuristic could consistently "win" each of those competitions.

Proof of Search \cite{Shibata2019ProofOfSearch} occupies an interesting spot among the proposed useful PoW in that it doesn't propose a useful PoW per se, but rather a way to transform suitable useful PoW into a blockchain. In that, it is complementary to our approach, since it provides a blueprint to implement the blockchain interface described in the listing \ref{BlockchainInterface}.

Perhaps most relevantly to the evolutionary search aspect of our algorithm, two evolutionary proofs of useful work have been proposed, both using the genetic algorithm \cite{Syafruddin2019EvolutionaryProofOfWork, Bizzaro2020ProofOfEvolution}. One focuses on solving the TSP problem by using a genetic algorithm directly, whereas the other one tries to create a framework similar to the Proof of Search, but using genetic algorithms specifically and applied to NP-hard problems such as TSP and Knapsack. Not only the approaches are vulnerable to the issues mentioned in the context of TSP blockchains mentioned above, but the genetic algorithm also implies a collaborative phase of parameter mixing during the "chromosome" "cross-over"\footnote{As we mention in the introduction, we use Genetic Algorithm only to designate EAs that has "chromosome" and "cross-over" phases, consistently with the nomenclature introduced in \cite{2015EvolutionStrategiesReview}}, which runs against the competitive nature of the Proof-of-Work and adds a layer of communication complexity. In addition to that, both approaches require messages between workers to carry whole parameter update vectors, adding a communication bottleneck for larger models.

\section{Conclusion}

In this paper, we present a new definition of distributed machine learning consensus - the \textit{Model-Consensus}, that generalizes the previously proposed ($\alpha$, $f$)-Byzantine Resilience and is applicable both in differentiable and non-differentiable settings.

We then present two distributed versions of the ($1,\lambda$)-Evolutionary Search algorithm, both reaching a model-consensus in a gradient-free setting. One leveraged the classical distributed algorithms abstraction of Total Order Broadcast to achieve a consensus in a permissioned setting, whereas the other used the Proof-of-Work leader election consensus to achieve the same result in a permissionless setting.

To our knowledge, our model-consensus definition is the first definition of Byzantine resilient consensus in machine learning that covers both gradient-free and gradient-based learning, while generalizing the previously proposed ($\alpha,f$)-Byzantine Resilience and allowing for direct compatibility with the classical distributed algorithms consensus definition.

To our knowledge, the two algorithms we propose are the first Byzantine-resilient gradient-free learning algorithms and the first byzantine-resilient evolutionary algorithms. Likewise, our permissionless distributed evolutionary search algorithm is the first useful proof-of-work algorithm that minimizes the overhead compared to traditional proof-of-work.

While proposed algorithms work for any black-box optimization problems, they are most suited for high-dimensional optimization problems where the evaluation of a single updated parameter set is quick, but the sheer number of dimensions makes the search for a valid update excessively slow for a single worker, with a notable application being the neuroevolution of large ANNs. We foresee that such a setting is of particular relevance to multipurpose super neural networks, such as PathNet \cite{Pathnet2017DeepMind}, as well as for conversational agents derived from LLMs \cite{SeeKeR2022Facebook, Sparrow2022Google}. 

Finally, more accessible and reliable byzantine-resilient machine learning allows a variety of entities to poll together their computational resources to train models they could not have trained individually, which has the potential of democratizing the state-of-the-art ML research in a non-differentiable setting.

\section{Acknowledgments}

We would like to thank the armasuisse - Cyber-Defence (CYD) Campus for the Distinguished Post Doctoral Fellowship supporting AK, as well as Fabien Salvi (EPFL) for the technical support regarding the computational infrastructure organization, and France Faille (EPFL) for the administrative support.

\bibliographystyle{ACM-Reference-Format}
\bibliography{sample-base}


\begin{thebibliography}{65}


\ifx \showCODEN    \undefined \def \showCODEN     #1{\unskip}     \fi
\ifx \showDOI      \undefined \def \showDOI       #1{#1}\fi
\ifx \showISBNx    \undefined \def \showISBNx     #1{\unskip}     \fi
\ifx \showISBNxiii \undefined \def \showISBNxiii  #1{\unskip}     \fi
\ifx \showISSN     \undefined \def \showISSN      #1{\unskip}     \fi
\ifx \showLCCN     \undefined \def \showLCCN      #1{\unskip}     \fi
\ifx \shownote     \undefined \def \shownote      #1{#1}          \fi
\ifx \showarticletitle \undefined \def \showarticletitle #1{#1}   \fi
\ifx \showURL      \undefined \def \showURL       {\relax}        \fi
\providecommand\bibfield[2]{#2}
\providecommand\bibinfo[2]{#2}
\providecommand\natexlab[1]{#1}
\providecommand\showeprint[2][]{arXiv:#2}

\bibitem[\protect\citeauthoryear{??}{GPT}{2020}]%
        {GPT3Explained2020OpenAI}
 \bibinfo{year}{2020}\natexlab{}.
\newblock \bibinfo{title}{{OpenAI}'s {GPT}-3 {Language} {Model}: {A}
  {Technical} {Overview}}.
\newblock
\newblock
\urldef\tempurl%
\url{https://lambdalabs.com/blog/demystifying-gpt-3/}
\showURL{%
\tempurl}


\bibitem[\protect\citeauthoryear{Baldominos and Saez}{Baldominos and
  Saez}{2019}]%
        {Baldominos2019CoinAI}
\bibfield{author}{\bibinfo{person}{Alejandro Baldominos} {and}
  \bibinfo{person}{Yago Saez}.} \bibinfo{year}{2019}\natexlab{}.
\newblock \showarticletitle{Coin.AI: {A} Proof-of-Useful-Work Scheme for
  Blockchain-Based Distributed Deep Learning}.
\newblock \bibinfo{journal}{\emph{Entropy}} \bibinfo{volume}{21},
  \bibinfo{number}{8} (\bibinfo{year}{2019}), \bibinfo{pages}{723}.
\newblock
\urldef\tempurl%
\url{https://doi.org/10.3390/e21080723}
\showDOI{\tempurl}


\bibitem[\protect\citeauthoryear{Ball, Rosen, Sabin, and Vasudevan}{Ball
  et~al\mbox{.}}{2017}]%
        {Ball2017ProofOfUsefulWork}
\bibfield{author}{\bibinfo{person}{Marshall Ball}, \bibinfo{person}{Alon
  Rosen}, \bibinfo{person}{Manuel Sabin}, {and}
  \bibinfo{person}{Prashant~Nalini Vasudevan}.}
  \bibinfo{year}{2017}\natexlab{}.
\newblock \showarticletitle{Proofs of Useful Work}.
\newblock \bibinfo{journal}{\emph{{IACR} Cryptol. ePrint Arch.}}
  (\bibinfo{year}{2017}), \bibinfo{pages}{203}.
\newblock
\urldef\tempurl%
\url{http://eprint.iacr.org/2017/203}
\showURL{%
\tempurl}


\bibitem[\protect\citeauthoryear{Barham, Chowdhery, Dean, Ghemawat, Hand, Hurt,
  Isard, Lim, Pang, Roy, et~al\mbox{.}}{Barham et~al\mbox{.}}{2022}]%
        {Pathways2022GoogleJeffDean}
\bibfield{author}{\bibinfo{person}{Paul Barham}, \bibinfo{person}{Aakanksha
  Chowdhery}, \bibinfo{person}{Jeff Dean}, \bibinfo{person}{Sanjay Ghemawat},
  \bibinfo{person}{Steven Hand}, \bibinfo{person}{Daniel Hurt},
  \bibinfo{person}{Michael Isard}, \bibinfo{person}{Hyeontaek Lim},
  \bibinfo{person}{Ruoming Pang}, \bibinfo{person}{Sudip Roy}, {et~al\mbox{.}}}
  \bibinfo{year}{2022}\natexlab{}.
\newblock \showarticletitle{Pathways: Asynchronous distributed dataflow for
  ml}.
\newblock \bibinfo{journal}{\emph{Proceedings of Machine Learning and Systems}}
   \bibinfo{volume}{4} (\bibinfo{year}{2022}), \bibinfo{pages}{430--449}.
\newblock


\bibitem[\protect\citeauthoryear{Belding}{Belding}{1995}]%
        {DistributedGA1995}
\bibfield{author}{\bibinfo{person}{Theodore~C. Belding}.}
  \bibinfo{year}{1995}\natexlab{}.
\newblock \showarticletitle{The Distributed Genetic Algorithm Revisited}. In
  \bibinfo{booktitle}{\emph{Proceedings of the 6th International Conference on
  Genetic Algorithms, Pittsburgh, PA, USA, July 15-19, 1995}},
  \bibfield{editor}{\bibinfo{person}{Larry~J. Eshelman}} (Ed.).
  \bibinfo{publisher}{Morgan Kaufmann}, \bibinfo{pages}{114--121}.
\newblock


\bibitem[\protect\citeauthoryear{Bizzaro, Conti, and Pini}{Bizzaro
  et~al\mbox{.}}{2020}]%
        {Bizzaro2020ProofOfEvolution}
\bibfield{author}{\bibinfo{person}{Francesco Bizzaro}, \bibinfo{person}{Mauro
  Conti}, {and} \bibinfo{person}{Maria~Silvia Pini}.}
  \bibinfo{year}{2020}\natexlab{}.
\newblock \showarticletitle{Proof of Evolution: leveraging blockchain mining
  for a cooperative execution of Genetic Algorithms}. In
  \bibinfo{booktitle}{\emph{{IEEE} International Conference on Blockchain,
  Blockchain 2020, Rhodes, Greece, November 2-6, 2020}}.
  \bibinfo{publisher}{{IEEE}}, \bibinfo{pages}{450--455}.
\newblock
\urldef\tempurl%
\url{https://doi.org/10.1109/Blockchain50366.2020.00065}
\showDOI{\tempurl}


\bibitem[\protect\citeauthoryear{Blanchard, Mhamdi, Guerraoui, and
  Stainer}{Blanchard et~al\mbox{.}}{2017a}]%
        {Mehdi01Bis}
\bibfield{author}{\bibinfo{person}{Peva Blanchard},
  \bibinfo{person}{El~Mahdi~El Mhamdi}, \bibinfo{person}{Rachid Guerraoui},
  {and} \bibinfo{person}{Julien Stainer}.} \bibinfo{year}{2017}\natexlab{a}.
\newblock \showarticletitle{Brief Announcement: Byzantine-Tolerant Machine
  Learning}. In \bibinfo{booktitle}{\emph{{PODC} 2017}},
  \bibfield{editor}{\bibinfo{person}{Elad~Michael Schiller} {and}
  \bibinfo{person}{Alexander~A. Schwarzmann}} (Eds.).
  \bibinfo{publisher}{{ACM}}, \bibinfo{pages}{455--457}.
\newblock


\bibitem[\protect\citeauthoryear{Blanchard, Mhamdi, Guerraoui, and
  Stainer}{Blanchard et~al\mbox{.}}{2017b}]%
        {Mehdi01}
\bibfield{author}{\bibinfo{person}{Peva Blanchard},
  \bibinfo{person}{El~Mahdi~El Mhamdi}, \bibinfo{person}{Rachid Guerraoui},
  {and} \bibinfo{person}{Julien Stainer}.} \bibinfo{year}{2017}\natexlab{b}.
\newblock \showarticletitle{Machine Learning with Adversaries: Byzantine
  Tolerant Gradient Descent}. In \bibinfo{booktitle}{\emph{Advances in Neural
  Information Processing Systems 30: NeurIPS 2017}},
  \bibfield{editor}{\bibinfo{person}{Isabelle Guyon}, \bibinfo{person}{Ulrike
  von Luxburg}, \bibinfo{person}{Samy Bengio}, \bibinfo{person}{Hanna~M.
  Wallach}, \bibinfo{person}{Rob Fergus}, \bibinfo{person}{S.~V.~N.
  Vishwanathan}, {and} \bibinfo{person}{Roman Garnett}} (Eds.).
  \bibinfo{pages}{119--129}.
\newblock
\urldef\tempurl%
\url{https://proceedings.neurips.cc/paper/2017/hash/f4b9ec30ad9f68f89b29639786cb62ef-Abstract.html}
\showURL{%
\tempurl}


\bibitem[\protect\citeauthoryear{Chang and Maxemchuk}{Chang and
  Maxemchuk}{1984}]%
        {OGReliableBroadcast1984}
\bibfield{author}{\bibinfo{person}{Jo{-}Mei Chang} {and}
  \bibinfo{person}{Nicholas~F. Maxemchuk}.} \bibinfo{year}{1984}\natexlab{}.
\newblock \showarticletitle{Reliable Broadcast Protocols}.
\newblock \bibinfo{journal}{\emph{{ACM} Trans. Comput. Syst.}}
  \bibinfo{volume}{2}, \bibinfo{number}{3} (\bibinfo{year}{1984}),
  \bibinfo{pages}{251--273}.
\newblock


\bibitem[\protect\citeauthoryear{Conti, Madhavan, Such, Lehman, Stanley, and
  Clune}{Conti et~al\mbox{.}}{2018}]%
        {JeffCluneNeurIPS2018}
\bibfield{author}{\bibinfo{person}{Edoardo Conti}, \bibinfo{person}{Vashisht
  Madhavan}, \bibinfo{person}{Felipe~Petroski Such}, \bibinfo{person}{Joel
  Lehman}, \bibinfo{person}{Kenneth~O. Stanley}, {and} \bibinfo{person}{Jeff
  Clune}.} \bibinfo{year}{2018}\natexlab{}.
\newblock \showarticletitle{Improving Exploration in Evolution Strategies for
  Deep Reinforcement Learning via a Population of Novelty-Seeking Agents}. In
  \bibinfo{booktitle}{\emph{Advances in Neural Information Processing Systems
  31: NeurIPS 2018}}, \bibfield{editor}{\bibinfo{person}{Samy Bengio},
  \bibinfo{person}{Hanna~M. Wallach}, \bibinfo{person}{Hugo Larochelle},
  \bibinfo{person}{Kristen Grauman}, \bibinfo{person}{Nicol{\`{o}}
  Cesa{-}Bianchi}, {and} \bibinfo{person}{Roman Garnett}} (Eds.).
  \bibinfo{pages}{5032--5043}.
\newblock


\bibitem[\protect\citeauthoryear{Corporation}{Corporation}{2018a}]%
        {CVE-2017-5715}
\bibfield{author}{\bibinfo{person}{Intel Corporation}.}
  \bibinfo{year}{2018}\natexlab{a}.
\newblock \bibinfo{title}{{CVE}-2017-5715}.
\newblock \bibinfo{howpublished}{Available from MITRE, {CVE-ID}
  {CVE}-2017-5715.}.
\newblock
\urldef\tempurl%
\url{http://cve.mitre.org/cgi-bin/cvename.cgi?name=CVE-2017-5715}
\showURL{%
\tempurl}


\bibitem[\protect\citeauthoryear{Corporation}{Corporation}{2018b}]%
        {CVE-2017-5753}
\bibfield{author}{\bibinfo{person}{Intel Corporation}.}
  \bibinfo{year}{2018}\natexlab{b}.
\newblock \bibinfo{title}{{CVE}-2017-5753}.
\newblock \bibinfo{howpublished}{Available from MITRE, {CVE-ID}
  {CVE}-2017-5753.}.
\newblock
\urldef\tempurl%
\url{http://cve.mitre.org/cgi-bin/cvename.cgi?name=CVE-2017-5753}
\showURL{%
\tempurl}


\bibitem[\protect\citeauthoryear{El{-}Mhamdi, Farhadkhani, Guerraoui, Guirguis,
  Hoang, and Rouault}{El{-}Mhamdi et~al\mbox{.}}{2021}]%
        {Mehdi2021Jungle}
\bibfield{author}{\bibinfo{person}{El{-}Mahdi El{-}Mhamdi},
  \bibinfo{person}{Sadegh Farhadkhani}, \bibinfo{person}{Rachid Guerraoui},
  \bibinfo{person}{Arsany Guirguis}, \bibinfo{person}{L{\^{e}}{-}Nguy{\^{e}}n
  Hoang}, {and} \bibinfo{person}{S{\'{e}}bastien Rouault}.}
  \bibinfo{year}{2021}\natexlab{}.
\newblock \showarticletitle{Collaborative Learning in the Jungle
  (Decentralized, Byzantine, Heterogeneous, Asynchronous and Nonconvex
  Learning)}. In \bibinfo{booktitle}{\emph{Advances in Neural Information
  Processing Systems 34: NeurIPS 2021}},
  \bibfield{editor}{\bibinfo{person}{Marc'Aurelio Ranzato},
  \bibinfo{person}{Alina Beygelzimer}, \bibinfo{person}{Yann~N. Dauphin},
  \bibinfo{person}{Percy Liang}, {and} \bibinfo{person}{Jennifer~Wortman
  Vaughan}} (Eds.). \bibinfo{pages}{25044--25057}.
\newblock


\bibitem[\protect\citeauthoryear{El{-}Mhamdi, Guerraoui, Guirguis, Hoang, and
  Rouault}{El{-}Mhamdi et~al\mbox{.}}{2020}]%
        {Mehdi2020ByzSGD}
\bibfield{author}{\bibinfo{person}{El{-}Mahdi El{-}Mhamdi},
  \bibinfo{person}{Rachid Guerraoui}, \bibinfo{person}{Arsany Guirguis},
  \bibinfo{person}{L{\^{e}}~Nguy{\^{e}}n Hoang}, {and}
  \bibinfo{person}{S{\'{e}}bastien Rouault}.} \bibinfo{year}{2020}\natexlab{}.
\newblock \showarticletitle{Genuinely Distributed Byzantine Machine Learning}.
  In \bibinfo{booktitle}{\emph{{PODC} '20: {ACM} Symposium on Principles of
  Distributed Computing, Virtual Event, Italy, August 3-7, 2020}},
  \bibfield{editor}{\bibinfo{person}{Yuval Emek} {and}
  \bibinfo{person}{Christian Cachin}} (Eds.). \bibinfo{publisher}{{ACM}},
  \bibinfo{pages}{355--364}.
\newblock
\urldef\tempurl%
\url{https://doi.org/10.1145/3382734.3405695}
\showDOI{\tempurl}


\bibitem[\protect\citeauthoryear{Fern{\'{a}}ndez, Georgiou, Russell, and
  Shvartsman}{Fern{\'{a}}ndez et~al\mbox{.}}{2005}]%
        {DoAllProblemInByzantineContext2005}
\bibfield{author}{\bibinfo{person}{Antonio Fern{\'{a}}ndez},
  \bibinfo{person}{Chryssis Georgiou}, \bibinfo{person}{Alexander Russell},
  {and} \bibinfo{person}{Alexander~A. Shvartsman}.}
  \bibinfo{year}{2005}\natexlab{}.
\newblock \showarticletitle{The Do-All problem with Byzantine processor
  failures}.
\newblock \bibinfo{journal}{\emph{Theor. Comput. Sci.}} \bibinfo{volume}{333},
  \bibinfo{number}{3} (\bibinfo{year}{2005}), \bibinfo{pages}{433--454}.
\newblock
\urldef\tempurl%
\url{https://doi.org/10.1016/j.tcs.2004.06.034}
\showDOI{\tempurl}


\bibitem[\protect\citeauthoryear{Fernando, Banarse, Blundell, Zwols, Ha, Rusu,
  Pritzel, and Wierstra}{Fernando et~al\mbox{.}}{2017}]%
        {Pathnet2017DeepMind}
\bibfield{author}{\bibinfo{person}{Chrisantha Fernando}, \bibinfo{person}{Dylan
  Banarse}, \bibinfo{person}{Charles Blundell}, \bibinfo{person}{Yori Zwols},
  \bibinfo{person}{David Ha}, \bibinfo{person}{Andrei~A Rusu},
  \bibinfo{person}{Alexander Pritzel}, {and} \bibinfo{person}{Daan Wierstra}.}
  \bibinfo{year}{2017}\natexlab{}.
\newblock \showarticletitle{Pathnet: Evolution channels gradient descent in
  super neural networks}.
\newblock \bibinfo{journal}{\emph{arXiv preprint arXiv:1701.08734}}
  (\bibinfo{year}{2017}).
\newblock


\bibitem[\protect\citeauthoryear{Fogel, Owens, and Walsh}{Fogel
  et~al\mbox{.}}{1966}]%
        {Fogel1966}
\bibfield{author}{\bibinfo{person}{L.J. Fogel}, \bibinfo{person}{A.J. Owens},
  {and} \bibinfo{person}{M.J. Walsh}.} \bibinfo{year}{1966}\natexlab{}.
\newblock \bibinfo{booktitle}{\emph{Artificial intelligence through simulated
  evolution}}.
\newblock \bibinfo{publisher}{Wiley}, \bibinfo{address}{Chichester, WS, UK}.
\newblock


\bibitem[\protect\citeauthoryear{Forero, Cano, and Giannakis}{Forero
  et~al\mbox{.}}{2010}]%
        {ConsensusSVM2010}
\bibfield{author}{\bibinfo{person}{Pedro~A. Forero}, \bibinfo{person}{Alfonso
  Cano}, {and} \bibinfo{person}{Georgios~B. Giannakis}.}
  \bibinfo{year}{2010}\natexlab{}.
\newblock \showarticletitle{Consensus-Based Distributed Support Vector
  Machines}.
\newblock \bibinfo{journal}{\emph{J. Mach. Learn. Res.}}  \bibinfo{volume}{11}
  (\bibinfo{year}{2010}), \bibinfo{pages}{1663--1707}.
\newblock
\urldef\tempurl%
\url{http://portal.acm.org/citation.cfm?id=1859906}
\showURL{%
\tempurl}


\bibitem[\protect\citeauthoryear{Ganguli, Hernandez, Lovitt, DasSarma,
  Henighan, Jones, Joseph, Kernion, Mann, Askell, Bai, Chen, Conerly, Drain,
  Elhage, Showk, Fort, Hatfield{-}Dodds, Johnston, Kravec, Nanda, Ndousse,
  Olsson, Amodei, Amodei, Brown, Kaplan, McCandlish, Olah, and Clark}{Ganguli
  et~al\mbox{.}}{2022}]%
        {Antropic2022BiggerModels}
\bibfield{author}{\bibinfo{person}{Deep Ganguli}, \bibinfo{person}{Danny
  Hernandez}, \bibinfo{person}{Liane Lovitt}, \bibinfo{person}{Nova DasSarma},
  \bibinfo{person}{Tom Henighan}, \bibinfo{person}{Andy Jones},
  \bibinfo{person}{Nicholas Joseph}, \bibinfo{person}{Jackson Kernion},
  \bibinfo{person}{Benjamin Mann}, \bibinfo{person}{Amanda Askell},
  \bibinfo{person}{Yuntao Bai}, \bibinfo{person}{Anna Chen},
  \bibinfo{person}{Tom Conerly}, \bibinfo{person}{Dawn Drain},
  \bibinfo{person}{Nelson Elhage}, \bibinfo{person}{Sheer~El Showk},
  \bibinfo{person}{Stanislav Fort}, \bibinfo{person}{Zac Hatfield{-}Dodds},
  \bibinfo{person}{Scott Johnston}, \bibinfo{person}{Shauna Kravec},
  \bibinfo{person}{Neel Nanda}, \bibinfo{person}{Kamal Ndousse},
  \bibinfo{person}{Catherine Olsson}, \bibinfo{person}{Daniela Amodei},
  \bibinfo{person}{Dario Amodei}, \bibinfo{person}{Tom~B. Brown},
  \bibinfo{person}{Jared Kaplan}, \bibinfo{person}{Sam McCandlish},
  \bibinfo{person}{Chris Olah}, {and} \bibinfo{person}{Jack Clark}.}
  \bibinfo{year}{2022}\natexlab{}.
\newblock \showarticletitle{Predictability and Surprise in Large Generative
  Models}.
\newblock \bibinfo{journal}{\emph{CoRR}}  \bibinfo{volume}{abs/2202.07785}
  (\bibinfo{year}{2022}).
\newblock
\showeprint[arXiv]{2202.07785}
\urldef\tempurl%
\url{https://arxiv.org/abs/2202.07785}
\showURL{%
\tempurl}


\bibitem[\protect\citeauthoryear{Glaese, McAleese, Trebacz, Aslanides, Firoiu,
  Ewalds, Rauh, Weidinger, Chadwick, Thacker, Campbell{-}Gillingham, Uesato,
  Huang, Comanescu, Yang, See, Dathathri, Greig, Chen, Fritz, Elias, Green,
  Mokr{\'{a}}, Fernando, Wu, Foley, Young, Gabriel, Isaac, Mellor, Hassabis,
  Kavukcuoglu, Hendricks, and Irving}{Glaese et~al\mbox{.}}{2022}]%
        {Sparrow2022Google}
\bibfield{author}{\bibinfo{person}{Amelia Glaese}, \bibinfo{person}{Nat
  McAleese}, \bibinfo{person}{Maja Trebacz}, \bibinfo{person}{John Aslanides},
  \bibinfo{person}{Vlad Firoiu}, \bibinfo{person}{Timo Ewalds},
  \bibinfo{person}{Maribeth Rauh}, \bibinfo{person}{Laura Weidinger},
  \bibinfo{person}{Martin Chadwick}, \bibinfo{person}{Phoebe Thacker},
  \bibinfo{person}{Lucy Campbell{-}Gillingham}, \bibinfo{person}{Jonathan
  Uesato}, \bibinfo{person}{Po{-}Sen Huang}, \bibinfo{person}{Ramona
  Comanescu}, \bibinfo{person}{Fan Yang}, \bibinfo{person}{Abigail See},
  \bibinfo{person}{Sumanth Dathathri}, \bibinfo{person}{Rory Greig},
  \bibinfo{person}{Charlie Chen}, \bibinfo{person}{Doug Fritz},
  \bibinfo{person}{Jaume~Sanchez Elias}, \bibinfo{person}{Richard Green},
  \bibinfo{person}{Sona Mokr{\'{a}}}, \bibinfo{person}{Nicholas Fernando},
  \bibinfo{person}{Boxi Wu}, \bibinfo{person}{Rachel Foley},
  \bibinfo{person}{Susannah Young}, \bibinfo{person}{Iason Gabriel},
  \bibinfo{person}{William Isaac}, \bibinfo{person}{John Mellor},
  \bibinfo{person}{Demis Hassabis}, \bibinfo{person}{Koray Kavukcuoglu},
  \bibinfo{person}{Lisa~Anne Hendricks}, {and} \bibinfo{person}{Geoffrey
  Irving}.} \bibinfo{year}{2022}\natexlab{}.
\newblock \showarticletitle{Improving alignment of dialogue agents via targeted
  human judgements}.
\newblock \bibinfo{journal}{\emph{CoRR}}  \bibinfo{volume}{abs/2209.14375}
  (\bibinfo{year}{2022}).
\newblock
\urldef\tempurl%
\url{https://doi.org/10.48550/arXiv.2209.14375}
\showDOI{\tempurl}
\showeprint[arXiv]{2209.14375}


\bibitem[\protect\citeauthoryear{Goldberg}{Goldberg}{1989}]%
        {Goldberg1989GAs}
\bibfield{author}{\bibinfo{person}{David~E. Goldberg}.}
  \bibinfo{year}{1989}\natexlab{}.
\newblock \bibinfo{booktitle}{\emph{Genetic Algorithms in Search, Optimization
  and Machine Learning}}.
\newblock \bibinfo{publisher}{Addison-Wesley}, \bibinfo{address}{Reading, MA}.
\newblock


\bibitem[\protect\citeauthoryear{Goodfellow, Bengio, and Courville}{Goodfellow
  et~al\mbox{.}}{2016}]%
        {DeepLearninBook2016Goodfellow}
\bibfield{author}{\bibinfo{person}{Ian Goodfellow}, \bibinfo{person}{Yoshua
  Bengio}, {and} \bibinfo{person}{Aaron Courville}.}
  \bibinfo{year}{2016}\natexlab{}.
\newblock \bibinfo{booktitle}{\emph{Deep Learning}}.
\newblock \bibinfo{publisher}{MIT Press}.
\newblock
\newblock
\shownote{\url{http://www.deeplearningbook.org}}.


\bibitem[\protect\citeauthoryear{Guo, Zhang, Yu, Xie, Ma, Xiang, and Liu}{Guo
  et~al\mbox{.}}{2021}]%
        {guo2021Mozi}
\bibfield{author}{\bibinfo{person}{Shangwei Guo}, \bibinfo{person}{Tianwei
  Zhang}, \bibinfo{person}{Han Yu}, \bibinfo{person}{Xiaofei Xie},
  \bibinfo{person}{Lei Ma}, \bibinfo{person}{Tao Xiang}, {and}
  \bibinfo{person}{Yang Liu}.} \bibinfo{year}{2021}\natexlab{}.
\newblock \showarticletitle{Byzantine-resilient decentralized stochastic
  gradient descent}.
\newblock \bibinfo{journal}{\emph{IEEE Transactions on Circuits and Systems for
  Video Technology}} (\bibinfo{year}{2021}).
\newblock


\bibitem[\protect\citeauthoryear{Hansen, Arnold, and Auger}{Hansen
  et~al\mbox{.}}{2015a}]%
        {2015EvolutionStrategiesReview}
\bibfield{author}{\bibinfo{person}{Nikolaus Hansen}, \bibinfo{person}{Dirk~V.
  Arnold}, {and} \bibinfo{person}{Anne Auger}.}
  \bibinfo{year}{2015}\natexlab{a}.
\newblock \showarticletitle{Evolution Strategies}.
\newblock In \bibinfo{booktitle}{\emph{Springer Handbook of Computational
  Intelligence}}, \bibfield{editor}{\bibinfo{person}{Janusz Kacprzyk} {and}
  \bibinfo{person}{Witold Pedrycz}} (Eds.). \bibinfo{publisher}{Springer},
  \bibinfo{pages}{871--898}.
\newblock
\urldef\tempurl%
\url{https://doi.org/10.1007/978-3-662-43505-2\_44}
\showDOI{\tempurl}


\bibitem[\protect\citeauthoryear{Hansen, Arnold, and Auger}{Hansen
  et~al\mbox{.}}{2015b}]%
        {kacprzykEvolution2015}
\bibfield{author}{\bibinfo{person}{Nikolaus Hansen}, \bibinfo{person}{Dirk~V.
  Arnold}, {and} \bibinfo{person}{Anne Auger}.}
  \bibinfo{year}{2015}\natexlab{b}.
\newblock \showarticletitle{Evolution Strategies}.
\newblock In \bibinfo{booktitle}{\emph{Springer Handbook of Computational
  Intelligence}}, \bibfield{editor}{\bibinfo{person}{Janusz Kacprzyk} {and}
  \bibinfo{person}{Witold Pedrycz}} (Eds.). \bibinfo{publisher}{Springer Berlin
  Heidelberg}, \bibinfo{pages}{871--898}.
\newblock
\showISBNx{978-3-662-43504-5 978-3-662-43505-2}
\urldef\tempurl%
\url{https://doi.org/10.1007/978-3-662-43505-2\_44}
\showDOI{\tempurl}


\bibitem[\protect\citeauthoryear{Haouari, Mhiri, El{-}Masri, and
  Al{-}Yafi}{Haouari et~al\mbox{.}}{2022}]%
        {Haouari2022ContainerTSPPoUW}
\bibfield{author}{\bibinfo{person}{Mohamed Haouari}, \bibinfo{person}{Mariem
  Mhiri}, \bibinfo{person}{Mazen El{-}Masri}, {and} \bibinfo{person}{Karim
  Al{-}Yafi}.} \bibinfo{year}{2022}\natexlab{}.
\newblock \showarticletitle{A novel proof of useful work for a blockchain
  storing transportation transactions}.
\newblock \bibinfo{journal}{\emph{Inf. Process. Manag.}} \bibinfo{volume}{59},
  \bibinfo{number}{1} (\bibinfo{year}{2022}), \bibinfo{pages}{102749}.
\newblock
\urldef\tempurl%
\url{https://doi.org/10.1016/j.ipm.2021.102749}
\showDOI{\tempurl}


\bibitem[\protect\citeauthoryear{Hoffmann, Borgeaud, Mensch, Buchatskaya, Cai,
  Rutherford, de~Las~Casas, Hendricks, Welbl, Clark, Hennigan, Noland,
  Millican, van~den Driessche, Damoc, Guy, Osindero, Simonyan, Elsen, Rae,
  Vinyals, and Sifre}{Hoffmann et~al\mbox{.}}{2022}]%
        {ComputeOptimalLLMs2022Google}
\bibfield{author}{\bibinfo{person}{Jordan Hoffmann}, \bibinfo{person}{Sebastian
  Borgeaud}, \bibinfo{person}{Arthur Mensch}, \bibinfo{person}{Elena
  Buchatskaya}, \bibinfo{person}{Trevor Cai}, \bibinfo{person}{Eliza
  Rutherford}, \bibinfo{person}{Diego de Las~Casas}, \bibinfo{person}{Lisa~Anne
  Hendricks}, \bibinfo{person}{Johannes Welbl}, \bibinfo{person}{Aidan Clark},
  \bibinfo{person}{Tom Hennigan}, \bibinfo{person}{Eric Noland},
  \bibinfo{person}{Katie Millican}, \bibinfo{person}{George van~den Driessche},
  \bibinfo{person}{Bogdan Damoc}, \bibinfo{person}{Aurelia Guy},
  \bibinfo{person}{Simon Osindero}, \bibinfo{person}{Karen Simonyan},
  \bibinfo{person}{Erich Elsen}, \bibinfo{person}{Jack~W. Rae},
  \bibinfo{person}{Oriol Vinyals}, {and} \bibinfo{person}{Laurent Sifre}.}
  \bibinfo{year}{2022}\natexlab{}.
\newblock \showarticletitle{Training Compute-Optimal Large Language Models}.
\newblock \bibinfo{journal}{\emph{CoRR}}  \bibinfo{volume}{abs/2203.15556}
  (\bibinfo{year}{2022}).
\newblock
\urldef\tempurl%
\url{https://doi.org/10.48550/arXiv.2203.15556}
\showDOI{\tempurl}
\showeprint[arXiv]{2203.15556}


\bibitem[\protect\citeauthoryear{Hornik, Stinchcombe, and White}{Hornik
  et~al\mbox{.}}{1989}]%
        {Hornik1989}
\bibfield{author}{\bibinfo{person}{Kurt Hornik}, \bibinfo{person}{Maxwell
  Stinchcombe}, {and} \bibinfo{person}{Halbert White}.}
  \bibinfo{year}{1989}\natexlab{}.
\newblock \showarticletitle{Multilayer feedforward networks are universal
  approximators}.
\newblock \bibinfo{journal}{\emph{Neural Networks}} \bibinfo{volume}{2},
  \bibinfo{number}{5} (\bibinfo{year}{1989}), \bibinfo{pages}{359 -- 366}.
\newblock
\showISSN{0893-6080}
\urldef\tempurl%
\url{https://doi.org/10.1016/0893-6080(89)90020-8}
\showDOI{\tempurl}


\bibitem[\protect\citeauthoryear{Johnson, Schuster, Le, Krikun, Wu, Chen,
  Thorat, Vi{\'{e}}gas, Wattenberg, Corrado, Hughes, and Dean}{Johnson
  et~al\mbox{.}}{2017}]%
        {GoogleTranslate2017JeffDean}
\bibfield{author}{\bibinfo{person}{Melvin Johnson}, \bibinfo{person}{Mike
  Schuster}, \bibinfo{person}{Quoc~V. Le}, \bibinfo{person}{Maxim Krikun},
  \bibinfo{person}{Yonghui Wu}, \bibinfo{person}{Zhifeng Chen},
  \bibinfo{person}{Nikhil Thorat}, \bibinfo{person}{Fernanda~B. Vi{\'{e}}gas},
  \bibinfo{person}{Martin Wattenberg}, \bibinfo{person}{Greg Corrado},
  \bibinfo{person}{Macduff Hughes}, {and} \bibinfo{person}{Jeffrey Dean}.}
  \bibinfo{year}{2017}\natexlab{}.
\newblock \showarticletitle{Google's Multilingual Neural Machine Translation
  System: Enabling Zero-Shot Translation}.
\newblock \bibinfo{journal}{\emph{Trans. Assoc. Comput. Linguistics}}
  \bibinfo{volume}{5} (\bibinfo{year}{2017}), \bibinfo{pages}{339--351}.
\newblock
\urldef\tempurl%
\url{https://doi.org/10.1162/tacl\_a\_00065}
\showDOI{\tempurl}


\bibitem[\protect\citeauthoryear{Kaplan, McCandlish, Henighan, Brown, Chess,
  Child, Gray, Radford, Wu, and Amodei}{Kaplan et~al\mbox{.}}{2020}]%
        {ScalingLawsLLM2020OpenAI}
\bibfield{author}{\bibinfo{person}{Jared Kaplan}, \bibinfo{person}{Sam
  McCandlish}, \bibinfo{person}{Tom Henighan}, \bibinfo{person}{Tom~B. Brown},
  \bibinfo{person}{Benjamin Chess}, \bibinfo{person}{Rewon Child},
  \bibinfo{person}{Scott Gray}, \bibinfo{person}{Alec Radford},
  \bibinfo{person}{Jeffrey Wu}, {and} \bibinfo{person}{Dario Amodei}.}
  \bibinfo{year}{2020}\natexlab{}.
\newblock \showarticletitle{Scaling Laws for Neural Language Models}.
\newblock \bibinfo{journal}{\emph{CoRR}}  \bibinfo{volume}{abs/2001.08361}
  (\bibinfo{year}{2020}).
\newblock
\showeprint[arXiv]{2001.08361}
\urldef\tempurl%
\url{https://arxiv.org/abs/2001.08361}
\showURL{%
\tempurl}


\bibitem[\protect\citeauthoryear{Karras, Laine, and Aila}{Karras
  et~al\mbox{.}}{2019}]%
        {Karras2019}
\bibfield{author}{\bibinfo{person}{Tero Karras}, \bibinfo{person}{Samuli
  Laine}, {and} \bibinfo{person}{Timo Aila}.} \bibinfo{year}{2019}\natexlab{}.
\newblock \showarticletitle{A Style-Based Generator Architecture for Generative
  Adversarial Networks}. In \bibinfo{booktitle}{\emph{{IEEE} Conference on
  Computer Vision and Pattern Recognition, {CVPR} 2019, Long Beach, CA, USA,
  June 16-20, 2019}}. \bibinfo{publisher}{Computer Vision Foundation / {IEEE}},
  \bibinfo{pages}{4401--4410}.
\newblock
\urldef\tempurl%
\url{https://doi.org/10.1109/CVPR.2019.00453}
\showDOI{\tempurl}


\bibitem[\protect\citeauthoryear{King and Nadal}{King and Nadal}{2012}]%
        {ProofOfStakeOG2012}
\bibfield{author}{\bibinfo{person}{Sunny King} {and} \bibinfo{person}{Scott
  Nadal}.} \bibinfo{year}{2012}\natexlab{}.
\newblock \showarticletitle{Ppcoin: Peer-to-peer crypto-currency with
  proof-of-stake}.
\newblock \bibinfo{journal}{\emph{self-published paper, August}}
  \bibinfo{volume}{19}, \bibinfo{number}{1} (\bibinfo{year}{2012}).
\newblock


\bibitem[\protect\citeauthoryear{Krizhevsky, Sutskever, and Hinton}{Krizhevsky
  et~al\mbox{.}}{2012}]%
        {ImageNet2012}
\bibfield{author}{\bibinfo{person}{Alex Krizhevsky}, \bibinfo{person}{Ilya
  Sutskever}, {and} \bibinfo{person}{Geoffrey~E. Hinton}.}
  \bibinfo{year}{2012}\natexlab{}.
\newblock \showarticletitle{ImageNet Classification with Deep Convolutional
  Neural Networks}. In \bibinfo{booktitle}{\emph{Advances in Neural Information
  Processing Systems 25: NeurIPS 2012}},
  \bibfield{editor}{\bibinfo{person}{Peter~L. Bartlett},
  \bibinfo{person}{Fernando C.~N. Pereira}, \bibinfo{person}{Christopher J.~C.
  Burges}, \bibinfo{person}{L{\'{e}}on Bottou}, {and}
  \bibinfo{person}{Kilian~Q. Weinberger}} (Eds.). \bibinfo{pages}{1106--1114}.
\newblock


\bibitem[\protect\citeauthoryear{Lamport, Shostak, and Pease}{Lamport
  et~al\mbox{.}}{1982}]%
        {ByzantineGenerals1982Lamport}
\bibfield{author}{\bibinfo{person}{Leslie Lamport}, \bibinfo{person}{Robert~E.
  Shostak}, {and} \bibinfo{person}{Marshall~C. Pease}.}
  \bibinfo{year}{1982}\natexlab{}.
\newblock \showarticletitle{The Byzantine Generals Problem}.
\newblock \bibinfo{journal}{\emph{{ACM} Trans. Program. Lang. Syst.}}
  \bibinfo{volume}{4}, \bibinfo{number}{3} (\bibinfo{year}{1982}),
  \bibinfo{pages}{382--401}.
\newblock
\urldef\tempurl%
\url{https://doi.org/10.1145/357172.357176}
\showDOI{\tempurl}


\bibitem[\protect\citeauthoryear{LeCun}{LeCun}{1985}]%
        {LeCun1985}
\bibfield{author}{\bibinfo{person}{Yann LeCun}.}
  \bibinfo{year}{1985}\natexlab{}.
\newblock \showarticletitle{Une procedure d'apprentissage pour reseau a seuil
  asymetrique}.
\newblock \bibinfo{journal}{\emph{Proceedings of Cognitiva 85}}
  (\bibinfo{year}{1985}), \bibinfo{pages}{599--604}.
\newblock


\bibitem[\protect\citeauthoryear{LeCun, Bengio, and Hinton}{LeCun
  et~al\mbox{.}}{2015}]%
        {lecun2015}
\bibfield{author}{\bibinfo{person}{Yann LeCun}, \bibinfo{person}{Yoshua
  Bengio}, {and} \bibinfo{person}{Geoffrey Hinton}.}
  \bibinfo{year}{2015}\natexlab{}.
\newblock \showarticletitle{Deep learning}.
\newblock \bibinfo{journal}{\emph{nature}} \bibinfo{volume}{521},
  \bibinfo{number}{7553} (\bibinfo{year}{2015}), \bibinfo{pages}{436--444}.
\newblock


\bibitem[\protect\citeauthoryear{Li, Xu, Taylor, Studer, and Goldstein}{Li
  et~al\mbox{.}}{2018}]%
        {LossLandscapesVis2018}
\bibfield{author}{\bibinfo{person}{Hao Li}, \bibinfo{person}{Zheng Xu},
  \bibinfo{person}{Gavin Taylor}, \bibinfo{person}{Christoph Studer}, {and}
  \bibinfo{person}{Tom Goldstein}.} \bibinfo{year}{2018}\natexlab{}.
\newblock \showarticletitle{Visualizing the Loss Landscape of Neural Nets}. In
  \bibinfo{booktitle}{\emph{Advances in Neural Information Processing Systems
  31: NeurIPS 2018}}, \bibfield{editor}{\bibinfo{person}{Samy Bengio},
  \bibinfo{person}{Hanna~M. Wallach}, \bibinfo{person}{Hugo Larochelle},
  \bibinfo{person}{Kristen Grauman}, \bibinfo{person}{Nicol{\`{o}}
  Cesa{-}Bianchi}, {and} \bibinfo{person}{Roman Garnett}} (Eds.).
  \bibinfo{pages}{6391--6401}.
\newblock
\urldef\tempurl%
\url{https://proceedings.neurips.cc/paper/2018/hash/a41b3bb3e6b050b6c9067c67f663b915-Abstract.html}
\showURL{%
\tempurl}


\bibitem[\protect\citeauthoryear{Liu, Lan, Li, Miao, and Tian}{Liu
  et~al\mbox{.}}{2021}]%
        {Liu2021ProofOfLearning}
\bibfield{author}{\bibinfo{person}{Yuan Liu}, \bibinfo{person}{Yixiao Lan},
  \bibinfo{person}{Boyang Li}, \bibinfo{person}{Chunyan Miao}, {and}
  \bibinfo{person}{Zhihong Tian}.} \bibinfo{year}{2021}\natexlab{}.
\newblock \showarticletitle{Proof of Learning (PoLe): Empowering neural network
  training with consensus building on blockchains}.
\newblock \bibinfo{journal}{\emph{Comput. Networks}}  \bibinfo{volume}{201}
  (\bibinfo{year}{2021}), \bibinfo{pages}{108594}.
\newblock
\urldef\tempurl%
\url{https://doi.org/10.1016/j.comnet.2021.108594}
\showDOI{\tempurl}


\bibitem[\protect\citeauthoryear{Metz, Freeman, Schoenholz, and Kachman}{Metz
  et~al\mbox{.}}{2021}]%
        {GradientsIsNotAllYouNeed2022}
\bibfield{author}{\bibinfo{person}{Luke Metz}, \bibinfo{person}{C.~Daniel
  Freeman}, \bibinfo{person}{Samuel~S. Schoenholz}, {and} \bibinfo{person}{Tal
  Kachman}.} \bibinfo{year}{2021}\natexlab{}.
\newblock \showarticletitle{Gradients are Not All You Need}.
\newblock \bibinfo{journal}{\emph{CoRR}}  \bibinfo{volume}{abs/2111.05803}
  (\bibinfo{year}{2021}).
\newblock
\showeprint[arXiv]{2111.05803}
\urldef\tempurl%
\url{https://arxiv.org/abs/2111.05803}
\showURL{%
\tempurl}


\bibitem[\protect\citeauthoryear{Mikolov, Sutskever, Chen, Corrado, and
  Dean}{Mikolov et~al\mbox{.}}{2013}]%
        {Word2Vec2013}
\bibfield{author}{\bibinfo{person}{Tom{\'{a}}s Mikolov}, \bibinfo{person}{Ilya
  Sutskever}, \bibinfo{person}{Kai Chen}, \bibinfo{person}{Gregory~S. Corrado},
  {and} \bibinfo{person}{Jeffrey Dean}.} \bibinfo{year}{2013}\natexlab{}.
\newblock \showarticletitle{Distributed Representations of Words and Phrases
  and their Compositionality}. In \bibinfo{booktitle}{\emph{NeurIPS 2013}},
  \bibfield{editor}{\bibinfo{person}{Christopher J.~C. Burges},
  \bibinfo{person}{L{\'{e}}on Bottou}, \bibinfo{person}{Zoubin Ghahramani},
  {and} \bibinfo{person}{Kilian~Q. Weinberger}} (Eds.).
  \bibinfo{pages}{3111--3119}.
\newblock


\bibitem[\protect\citeauthoryear{Nakamoto}{Nakamoto}{2008}]%
        {NakamotoBitcoin2008}
\bibfield{author}{\bibinfo{person}{Satoshi Nakamoto}.}
  \bibinfo{year}{2008}\natexlab{}.
\newblock \showarticletitle{Bitcoin: A Peer-to-Peer Electronic Cash System}.
\newblock  (\bibinfo{year}{2008}), \bibinfo{pages}{9}.
\newblock


\bibitem[\protect\citeauthoryear{Niu, Feng, Dau, Huang, and Zhu}{Niu
  et~al\mbox{.}}{2019}]%
        {2019NakamotoConsensus2}
\bibfield{author}{\bibinfo{person}{Jianyu Niu}, \bibinfo{person}{Chen Feng},
  \bibinfo{person}{Hoang Dau}, \bibinfo{person}{Yu{-}Chih Huang}, {and}
  \bibinfo{person}{Jingge Zhu}.} \bibinfo{year}{2019}\natexlab{}.
\newblock \showarticletitle{Analysis of Nakamoto Consensus, Revisited}.
\newblock \bibinfo{journal}{\emph{{IACR} Cryptol. ePrint Arch.}}
  (\bibinfo{year}{2019}), \bibinfo{pages}{1225}.
\newblock
\urldef\tempurl%
\url{https://eprint.iacr.org/2019/1225}
\showURL{%
\tempurl}


\bibitem[\protect\citeauthoryear{Ouyang, Wu, Jiang, Almeida, Wainwright,
  Mishkin, Zhang, Agarwal, Slama, Ray, Schulman, Hilton, Kelton, Miller,
  Simens, Askell, Welinder, Christiano, Leike, and Lowe}{Ouyang
  et~al\mbox{.}}{2022}]%
        {InstructGPT2022OpenAI}
\bibfield{author}{\bibinfo{person}{Long Ouyang}, \bibinfo{person}{Jeff Wu},
  \bibinfo{person}{Xu Jiang}, \bibinfo{person}{Diogo Almeida},
  \bibinfo{person}{Carroll~L. Wainwright}, \bibinfo{person}{Pamela Mishkin},
  \bibinfo{person}{Chong Zhang}, \bibinfo{person}{Sandhini Agarwal},
  \bibinfo{person}{Katarina Slama}, \bibinfo{person}{Alex Ray},
  \bibinfo{person}{John Schulman}, \bibinfo{person}{Jacob Hilton},
  \bibinfo{person}{Fraser Kelton}, \bibinfo{person}{Luke Miller},
  \bibinfo{person}{Maddie Simens}, \bibinfo{person}{Amanda Askell},
  \bibinfo{person}{Peter Welinder}, \bibinfo{person}{Paul~F. Christiano},
  \bibinfo{person}{Jan Leike}, {and} \bibinfo{person}{Ryan Lowe}.}
  \bibinfo{year}{2022}\natexlab{}.
\newblock \showarticletitle{Training language models to follow instructions
  with human feedback}.
\newblock \bibinfo{journal}{\emph{CoRR}}  \bibinfo{volume}{abs/2203.02155}
  (\bibinfo{year}{2022}).
\newblock
\urldef\tempurl%
\url{https://doi.org/10.48550/arXiv.2203.02155}
\showDOI{\tempurl}
\showeprint[arXiv]{2203.02155}


\bibitem[\protect\citeauthoryear{Real, Aggarwal, Huang, and Le}{Real
  et~al\mbox{.}}{2019}]%
        {GoogleEvoImageArchiSearch2019}
\bibfield{author}{\bibinfo{person}{Esteban Real}, \bibinfo{person}{Alok
  Aggarwal}, \bibinfo{person}{Yanping Huang}, {and} \bibinfo{person}{Quoc~V.
  Le}.} \bibinfo{year}{2019}\natexlab{}.
\newblock \showarticletitle{Regularized Evolution for Image Classifier
  Architecture Search}. In \bibinfo{booktitle}{\emph{The Thirty-Third {AAAI}
  Conference on Artificial Intelligence, {AAAI} 2019}}.
  \bibinfo{publisher}{{AAAI} Press}, \bibinfo{pages}{4780--4789}.
\newblock
\urldef\tempurl%
\url{https://doi.org/10.1609/aaai.v33i01.33014780}
\showDOI{\tempurl}


\bibitem[\protect\citeauthoryear{Ren}{Ren}{2019}]%
        {2019NakomotoConsensusAnalysis}
\bibfield{author}{\bibinfo{person}{Ling Ren}.} \bibinfo{year}{2019}\natexlab{}.
\newblock \showarticletitle{Analysis of Nakamoto Consensus}.
\newblock \bibinfo{journal}{\emph{{IACR} Cryptol. ePrint Arch.}}
  (\bibinfo{year}{2019}), \bibinfo{pages}{943}.
\newblock
\urldef\tempurl%
\url{https://eprint.iacr.org/2019/943}
\showURL{%
\tempurl}


\bibitem[\protect\citeauthoryear{Salimans, Ho, Chen, and Sutskever}{Salimans
  et~al\mbox{.}}{2017}]%
        {SalmansSutskever2017}
\bibfield{author}{\bibinfo{person}{Tim Salimans}, \bibinfo{person}{Jonathan
  Ho}, \bibinfo{person}{Xi Chen}, {and} \bibinfo{person}{Ilya Sutskever}.}
  \bibinfo{year}{2017}\natexlab{}.
\newblock \showarticletitle{Evolution Strategies as a Scalable Alternative to
  Reinforcement Learning}.
\newblock \bibinfo{journal}{\emph{CoRR}}  \bibinfo{volume}{abs/1703.03864}
  (\bibinfo{year}{2017}).
\newblock
\showeprint[arXiv]{1703.03864}
\urldef\tempurl%
\url{http://arxiv.org/abs/1703.03864}
\showURL{%
\tempurl}


\bibitem[\protect\citeauthoryear{Schrittwieser, Antonoglou, Hubert, Simonyan,
  Sifre, Schmitt, Guez, Lockhart, Hassabis, Graepel,
  et~al\mbox{.}}{Schrittwieser et~al\mbox{.}}{2020}]%
        {AlphaGoTwo2020}
\bibfield{author}{\bibinfo{person}{Julian Schrittwieser},
  \bibinfo{person}{Ioannis Antonoglou}, \bibinfo{person}{Thomas Hubert},
  \bibinfo{person}{Karen Simonyan}, \bibinfo{person}{Laurent Sifre},
  \bibinfo{person}{Simon Schmitt}, \bibinfo{person}{Arthur Guez},
  \bibinfo{person}{Edward Lockhart}, \bibinfo{person}{Demis Hassabis},
  \bibinfo{person}{Thore Graepel}, {et~al\mbox{.}}}
  \bibinfo{year}{2020}\natexlab{}.
\newblock \showarticletitle{Mastering atari, go, chess and shogi by planning
  with a learned model}.
\newblock \bibinfo{journal}{\emph{Nature}} \bibinfo{volume}{588},
  \bibinfo{number}{7839} (\bibinfo{year}{2020}), \bibinfo{pages}{604--609}.
\newblock


\bibitem[\protect\citeauthoryear{Schroff, Kalenichenko, and Philbin}{Schroff
  et~al\mbox{.}}{2015}]%
        {2015FaceNet}
\bibfield{author}{\bibinfo{person}{Florian Schroff}, \bibinfo{person}{Dmitry
  Kalenichenko}, {and} \bibinfo{person}{James Philbin}.}
  \bibinfo{year}{2015}\natexlab{}.
\newblock \showarticletitle{FaceNet: {A} unified embedding for face recognition
  and clustering}. In \bibinfo{booktitle}{\emph{{IEEE} Conference on Computer
  Vision and Pattern Recognition, {CVPR} 2015, Boston, MA, USA, June 7-12,
  2015}}. \bibinfo{publisher}{{IEEE} Computer Society},
  \bibinfo{pages}{815--823}.
\newblock
\urldef\tempurl%
\url{https://doi.org/10.1109/CVPR.2015.7298682}
\showDOI{\tempurl}


\bibitem[\protect\citeauthoryear{Shibata}{Shibata}{2019}]%
        {Shibata2019ProofOfSearch}
\bibfield{author}{\bibinfo{person}{Naoki Shibata}.}
  \bibinfo{year}{2019}\natexlab{}.
\newblock \showarticletitle{Proof-of-Search: Combining Blockchain Consensus
  Formation With Solving Optimization Problems}.
\newblock \bibinfo{journal}{\emph{{IEEE} Access}}  \bibinfo{volume}{7}
  (\bibinfo{year}{2019}), \bibinfo{pages}{172994--173006}.
\newblock
\urldef\tempurl%
\url{https://doi.org/10.1109/ACCESS.2019.2956698}
\showDOI{\tempurl}


\bibitem[\protect\citeauthoryear{Shoker}{Shoker}{2018}]%
        {Shoker2017PODCProofOfExercise}
\bibfield{author}{\bibinfo{person}{Ali Shoker}.}
  \bibinfo{year}{2018}\natexlab{}.
\newblock \showarticletitle{Brief Announcement: Sustainable Blockchains through
  Proof of eXercise}. In \bibinfo{booktitle}{\emph{Proceedings of the 2018
  {ACM} Symposium on Principles of Distributed Computing, {PODC} 2018, Egham,
  United Kingdom, July 23-27, 2018}}, \bibfield{editor}{\bibinfo{person}{Calvin
  Newport} {and} \bibinfo{person}{Idit Keidar}} (Eds.).
  \bibinfo{publisher}{{ACM}}, \bibinfo{pages}{269--271}.
\newblock
\urldef\tempurl%
\url{https://dl.acm.org/citation.cfm?id=3212781}
\showURL{%
\tempurl}


\bibitem[\protect\citeauthoryear{Shuster, Komeili, Adolphs, Roller, Szlam, and
  Weston}{Shuster et~al\mbox{.}}{2022a}]%
        {SeeKeR2022Facebook}
\bibfield{author}{\bibinfo{person}{Kurt Shuster}, \bibinfo{person}{Mojtaba
  Komeili}, \bibinfo{person}{Leonard Adolphs}, \bibinfo{person}{Stephen
  Roller}, \bibinfo{person}{Arthur Szlam}, {and} \bibinfo{person}{Jason
  Weston}.} \bibinfo{year}{2022}\natexlab{a}.
\newblock \showarticletitle{Language Models that Seek for Knowledge: Modular
  Search {\&} Generation for Dialogue and Prompt Completion}.
\newblock \bibinfo{journal}{\emph{CoRR}}  \bibinfo{volume}{abs/2203.13224}
  (\bibinfo{year}{2022}).
\newblock
\urldef\tempurl%
\url{https://doi.org/10.48550/arXiv.2203.13224}
\showDOI{\tempurl}
\showeprint[arXiv]{2203.13224}


\bibitem[\protect\citeauthoryear{Shuster, Xu, Komeili, Ju, Smith, Roller, Ung,
  Chen, Arora, Lane, Behrooz, Ngan, Poff, Goyal, Szlam, Boureau, Kambadur, and
  Weston}{Shuster et~al\mbox{.}}{2022b}]%
        {BlenderBot2022Facebook}
\bibfield{author}{\bibinfo{person}{Kurt Shuster}, \bibinfo{person}{Jing Xu},
  \bibinfo{person}{Mojtaba Komeili}, \bibinfo{person}{Da Ju},
  \bibinfo{person}{Eric~Michael Smith}, \bibinfo{person}{Stephen Roller},
  \bibinfo{person}{Megan Ung}, \bibinfo{person}{Moya Chen},
  \bibinfo{person}{Kushal Arora}, \bibinfo{person}{Joshua Lane},
  \bibinfo{person}{Morteza Behrooz}, \bibinfo{person}{William Ngan},
  \bibinfo{person}{Spencer Poff}, \bibinfo{person}{Naman Goyal},
  \bibinfo{person}{Arthur Szlam}, \bibinfo{person}{Y{-}Lan Boureau},
  \bibinfo{person}{Melanie Kambadur}, {and} \bibinfo{person}{Jason Weston}.}
  \bibinfo{year}{2022}\natexlab{b}.
\newblock \showarticletitle{BlenderBot 3: a deployed conversational agent that
  continually learns to responsibly engage}.
\newblock \bibinfo{journal}{\emph{CoRR}}  \bibinfo{volume}{abs/2208.03188}
  (\bibinfo{year}{2022}).
\newblock
\urldef\tempurl%
\url{https://doi.org/10.48550/arXiv.2208.03188}
\showDOI{\tempurl}
\showeprint[arXiv]{2208.03188}


\bibitem[\protect\citeauthoryear{Stanley, Clune, Lehman, and
  Miikkulainen}{Stanley et~al\mbox{.}}{2019}]%
        {CluneMiikkulainen2019}
\bibfield{author}{\bibinfo{person}{Kenneth~O. Stanley}, \bibinfo{person}{Jeff
  Clune}, \bibinfo{person}{Joel Lehman}, {and} \bibinfo{person}{Risto
  Miikkulainen}.} \bibinfo{year}{2019}\natexlab{}.
\newblock \showarticletitle{Designing neural networks through neuroevolution}.
\newblock \bibinfo{journal}{\emph{Nat. Mach. Intell.}} \bibinfo{volume}{1},
  \bibinfo{number}{1} (\bibinfo{year}{2019}), \bibinfo{pages}{24--35}.
\newblock


\bibitem[\protect\citeauthoryear{Stoll, Klaa{\ss}en, and
  Gallersd{\"o}rfer}{Stoll et~al\mbox{.}}{2019}]%
        {BitcoinCarbonCellPress2019}
\bibfield{author}{\bibinfo{person}{Christian Stoll}, \bibinfo{person}{Lena
  Klaa{\ss}en}, {and} \bibinfo{person}{Ulrich Gallersd{\"o}rfer}.}
  \bibinfo{year}{2019}\natexlab{}.
\newblock \showarticletitle{The carbon footprint of bitcoin}.
\newblock \bibinfo{journal}{\emph{Joule}} \bibinfo{volume}{3},
  \bibinfo{number}{7} (\bibinfo{year}{2019}), \bibinfo{pages}{1647--1661}.
\newblock


\bibitem[\protect\citeauthoryear{Such, Madhavan, Conti, Lehman, Stanley, and
  Clune}{Such et~al\mbox{.}}{2017}]%
        {2017UberGeneticAlgos}
\bibfield{author}{\bibinfo{person}{Felipe~Petroski Such},
  \bibinfo{person}{Vashisht Madhavan}, \bibinfo{person}{Edoardo Conti},
  \bibinfo{person}{Joel Lehman}, \bibinfo{person}{Kenneth~O. Stanley}, {and}
  \bibinfo{person}{Jeff Clune}.} \bibinfo{year}{2017}\natexlab{}.
\newblock \showarticletitle{Deep Neuroevolution: Genetic Algorithms Are a
  Competitive Alternative for Training Deep Neural Networks for Reinforcement
  Learning}.
\newblock \bibinfo{journal}{\emph{CoRR}}  \bibinfo{volume}{abs/1712.06567}
  (\bibinfo{year}{2017}).
\newblock
\showeprint[arXiv]{1712.06567}
\urldef\tempurl%
\url{http://arxiv.org/abs/1712.06567}
\showURL{%
\tempurl}


\bibitem[\protect\citeauthoryear{Syafruddin, Dadkhah, and
  K{\"{o}}ppen}{Syafruddin et~al\mbox{.}}{2019}]%
        {Syafruddin2019EvolutionaryProofOfWork}
\bibfield{author}{\bibinfo{person}{Willa~Ariela Syafruddin},
  \bibinfo{person}{Sajjad Dadkhah}, {and} \bibinfo{person}{Mario
  K{\"{o}}ppen}.} \bibinfo{year}{2019}\natexlab{}.
\newblock \showarticletitle{Blockchain Scheme Based on Evolutionary Proof of
  Work}. In \bibinfo{booktitle}{\emph{{IEEE} Congress on Evolutionary
  Computation, {CEC} 2019, Wellington, New Zealand, June 10-13, 2019}}.
  \bibinfo{publisher}{{IEEE}}, \bibinfo{pages}{771--776}.
\newblock
\urldef\tempurl%
\url{https://doi.org/10.1109/CEC.2019.8790128}
\showDOI{\tempurl}


\bibitem[\protect\citeauthoryear{Vapnik}{Vapnik}{1995}]%
        {Vapnik1995SVMsEnglish}
\bibfield{author}{\bibinfo{person}{Vladimir~Naumovich Vapnik}.}
  \bibinfo{year}{1995}\natexlab{}.
\newblock \bibinfo{booktitle}{\emph{The Nature of Statistical Learning
  Theory}}.
\newblock \bibinfo{publisher}{Springer}.
\newblock
\showISBNx{978-1-4757-2442-4}
\urldef\tempurl%
\url{https://doi.org/10.1007/978-1-4757-2440-0}
\showDOI{\tempurl}


\bibitem[\protect\citeauthoryear{Vaswani, Shazeer, Parmar, Uszkoreit, Jones,
  Gomez, Kaiser, and Polosukhin}{Vaswani et~al\mbox{.}}{2017}]%
        {Transformers2017}
\bibfield{author}{\bibinfo{person}{Ashish Vaswani}, \bibinfo{person}{Noam
  Shazeer}, \bibinfo{person}{Niki Parmar}, \bibinfo{person}{Jakob Uszkoreit},
  \bibinfo{person}{Llion Jones}, \bibinfo{person}{Aidan~N. Gomez},
  \bibinfo{person}{Lukasz Kaiser}, {and} \bibinfo{person}{Illia Polosukhin}.}
  \bibinfo{year}{2017}\natexlab{}.
\newblock \showarticletitle{Attention is All you Need}. In
  \bibinfo{booktitle}{\emph{Advances in Neural Information Processing Systems
  30: NeurIPS 2017}}, \bibfield{editor}{\bibinfo{person}{Isabelle Guyon},
  \bibinfo{person}{Ulrike von Luxburg}, \bibinfo{person}{Samy Bengio},
  \bibinfo{person}{Hanna~M. Wallach}, \bibinfo{person}{Rob Fergus},
  \bibinfo{person}{S.~V.~N. Vishwanathan}, {and} \bibinfo{person}{Roman
  Garnett}} (Eds.). \bibinfo{pages}{5998--6008}.
\newblock


\bibitem[\protect\citeauthoryear{Werbos}{Werbos}{1974}]%
        {Werbos1974}
\bibfield{author}{\bibinfo{person}{P.~J. Werbos}.}
  \bibinfo{year}{1974}\natexlab{}.
\newblock \emph{\bibinfo{title}{Beyond Regression: New Tools for Prediction and
  Analysis in the Behavioral Sciences}}.
\newblock \bibinfo{thesistype}{Ph.\,D. Dissertation}. \bibinfo{school}{Harvard
  University}.
\newblock


\bibitem[\protect\citeauthoryear{Wierstra, Schaul, Peters, and
  Schmidhuber}{Wierstra et~al\mbox{.}}{2008}]%
        {2008NES}
\bibfield{author}{\bibinfo{person}{Daan Wierstra}, \bibinfo{person}{Tom
  Schaul}, \bibinfo{person}{Jan Peters}, {and} \bibinfo{person}{J{\"{u}}rgen
  Schmidhuber}.} \bibinfo{year}{2008}\natexlab{}.
\newblock \showarticletitle{Natural Evolution Strategies}. In
  \bibinfo{booktitle}{\emph{Proceedings of the {IEEE} Congress on Evolutionary
  Computation, {CEC} 2008, June 1-6, 2008, Hong Kong, China}}.
  \bibinfo{publisher}{{IEEE}}, \bibinfo{pages}{3381--3387}.
\newblock
\urldef\tempurl%
\url{https://doi.org/10.1109/CEC.2008.4631255}
\showDOI{\tempurl}


\bibitem[\protect\citeauthoryear{Xu, Ba, Kiros, Cho, Courville, Salakhutdinov,
  Zemel, and Bengio}{Xu et~al\mbox{.}}{2015}]%
        {HardAttention2015Bengio}
\bibfield{author}{\bibinfo{person}{Kelvin Xu}, \bibinfo{person}{Jimmy Ba},
  \bibinfo{person}{Ryan Kiros}, \bibinfo{person}{Kyunghyun Cho},
  \bibinfo{person}{Aaron~C. Courville}, \bibinfo{person}{Ruslan Salakhutdinov},
  \bibinfo{person}{Richard~S. Zemel}, {and} \bibinfo{person}{Yoshua Bengio}.}
  \bibinfo{year}{2015}\natexlab{}.
\newblock \showarticletitle{Show, Attend and Tell: Neural Image Caption
  Generation with Visual Attention}. In \bibinfo{booktitle}{\emph{Proceedings
  of the 32nd International Conference on Machine Learning, {ICML} 2015}}
  \emph{(\bibinfo{series}{{JMLR} Workshop and Conference Proceedings},
  Vol.~\bibinfo{volume}{37})}, \bibfield{editor}{\bibinfo{person}{Francis~R.
  Bach} {and} \bibinfo{person}{David~M. Blei}} (Eds.).
  \bibinfo{publisher}{JMLR.org}, \bibinfo{pages}{2048--2057}.
\newblock


\bibitem[\protect\citeauthoryear{Yang and Bajwa}{Yang and Bajwa}{2019a}]%
        {Yang2019Bridge}
\bibfield{author}{\bibinfo{person}{Zhixiong Yang} {and}
  \bibinfo{person}{Waheed~U. Bajwa}.} \bibinfo{year}{2019}\natexlab{a}.
\newblock \showarticletitle{{BRIDGE:} Byzantine-resilient Decentralized
  Gradient Descent}.
\newblock \bibinfo{journal}{\emph{CoRR}}  \bibinfo{volume}{abs/1908.08098}
  (\bibinfo{year}{2019}).
\newblock
\showeprint[arXiv]{1908.08098}
\urldef\tempurl%
\url{http://arxiv.org/abs/1908.08098}
\showURL{%
\tempurl}


\bibitem[\protect\citeauthoryear{Yang and Bajwa}{Yang and Bajwa}{2019b}]%
        {Yang2019Byrdie}
\bibfield{author}{\bibinfo{person}{Zhixiong Yang} {and}
  \bibinfo{person}{Waheed~U. Bajwa}.} \bibinfo{year}{2019}\natexlab{b}.
\newblock \showarticletitle{ByRDiE: Byzantine-Resilient Distributed Coordinate
  Descent for Decentralized Learning}.
\newblock \bibinfo{journal}{\emph{{IEEE} Trans. Signal Inf. Process. over
  Networks}} \bibinfo{volume}{5}, \bibinfo{number}{4} (\bibinfo{year}{2019}),
  \bibinfo{pages}{611--627}.
\newblock
\urldef\tempurl%
\url{https://doi.org/10.1109/TSIPN.2019.2928176}
\showDOI{\tempurl}


\bibitem[\protect\citeauthoryear{Yurtsever, Lambert, Carballo, and
  Takeda}{Yurtsever et~al\mbox{.}}{2020}]%
        {DeepLearningSelfDriving2020}
\bibfield{author}{\bibinfo{person}{Ekim Yurtsever}, \bibinfo{person}{Jacob
  Lambert}, \bibinfo{person}{Alexander Carballo}, {and} \bibinfo{person}{K.
  Takeda}.} \bibinfo{year}{2020}\natexlab{}.
\newblock \showarticletitle{A Survey of Autonomous Driving: Common Practices
  and Emerging Technologies}.
\newblock \bibinfo{journal}{\emph{IEEE Access}}  \bibinfo{volume}{8}
  (\bibinfo{year}{2020}), \bibinfo{pages}{58443--58469}.
\newblock


\bibitem[\protect\citeauthoryear{Zhang, Eyal, Escriva, Juels, and van
  Renesse}{Zhang et~al\mbox{.}}{2017}]%
        {ResourceEfficientMining2017}
\bibfield{author}{\bibinfo{person}{Fan Zhang}, \bibinfo{person}{Ittay Eyal},
  \bibinfo{person}{Robert Escriva}, \bibinfo{person}{Ari Juels}, {and}
  \bibinfo{person}{Robbert van Renesse}.} \bibinfo{year}{2017}\natexlab{}.
\newblock \showarticletitle{{REM:} Resource-Efficient Mining for Blockchains}.
  In \bibinfo{booktitle}{\emph{26th {USENIX} Security Symposium, {USENIX}
  Security 2017, Vancouver, BC, Canada, August 16-18, 2017}},
  \bibfield{editor}{\bibinfo{person}{Engin Kirda} {and} \bibinfo{person}{Thomas
  Ristenpart}} (Eds.). \bibinfo{publisher}{{USENIX} Association},
  \bibinfo{pages}{1427--1444}.
\newblock
\urldef\tempurl%
\url{https://www.usenix.org/conference/usenixsecurity17/technical-sessions/presentation/zhang}
\showURL{%
\tempurl}


\end{thebibliography}










\end{document}